\documentclass{osa-article}
\usepackage{amsmath}
\usepackage{mathtools}
\usepackage{tabularx}
\usepackage{graphicx}
\usepackage[labelformat=simple]{subcaption}

\usepackage{color}
\newtheorem{theorem}{Theorem}

\newtheorem{proof}{Proof}
\newtheorem{proposition}{Proposition}
\DeclareMathOperator\erf{erf}
\newcolumntype{Z}{>{\centering\arraybackslash}X}

\journal{boe}

\begin{document}

\title{Optical Wireless Cochlear Implants}

\author{Stylianos E. Trevlakis,\authormark{1,2} Alexandros-Apostolos A. Boulogeorgos,\authormark{1, 3} Paschalis C. Sofotasios,\authormark{2,4,*} Sami Muhaidat,\authormark{2} and George K. Karagiannidis\authormark{1}}

\address{\authormark{1}Department of Electrical and Computer Engineering, Aristotle University of Thessaloniki, GR-54124 Thessaloniki, Greece\\
\authormark{2}Department of Electrical and Computer Engineering, Khalifa University of Science and Technology, UAE-127788 Abu Dhabi, United Arab Emirates\\
\authormark{3}Department of Digital Systems, University of Piraeus, GR-18534 Piraeus, Greece\\
\authormark{4}Department of Electronics and Communications Engineering, Tampere University of Technology, FI-33720 Tampere, Finland}

\email{\authormark{*}paschalis.sofotasios@ku.ac.ae} 

\begin{abstract}
In the present contribution, we introduce a wireless optical communication-based system architecture which is shown to significantly improve the reliability and the spectral and power efficiency of the transcutaneous link in cochlear implants (CIs). 
We refer to the proposed system as optical wireless cochlear implant (OWCI).
In order to provide a quantified understanding of its design parameters, we establish a theoretical framework that takes into account the channel particularities, the integration area of the internal unit, the transceivers misalignment, and the characteristics of the optical units. 
To this end, we derive explicit expressions for the corresponding average signal-to-noise-ratio, outage probability, ergodic spectral efficiency and capacity of the transcutaneous optical link (TOL).  
These expressions are subsequently used to assess the dependence of the TOL's communication quality on the transceivers design parameters and the corresponding channels characteristics. 
The offered analytic results are corroborated with respective results from Monte Carlo simulations. 
Our findings reveal that OWCI is a particularly promising architecture that drastically increases the reliability and effectiveness of the CI TOL,  whilst it requires considerably lower transmit power compared to the corresponding widely-used radio frequency (RF) solution. 
\end{abstract}

\section{Introduction}
During the past decades, medical implants have been advocated as an effective solution to numerous health issues due  to the quality of life improvements they can provide. 
One of the most successful application of such devices is cochlear implants (CIs), which have restored partial hearing to more than $350,000$ people worldwide, half of whom are pediatric users who ultimately develop nearly normal language capability~\cite{A:The_modern_cochlear_implant,A:Cochlear_Implants_System_design_integration_and_evaluation}.
A typical CI consists of an out-of-body unit, which uses a microphone to capture the sound, converts it into  a radio frequency (RF) signal and wirelessly transmits it to an in-body unit.
Conventional CIs exploit near-field magnetic communication technologies and typically operate in low RF frequencies, from $5$ MHz to $49$ MHz, while their transmit power is in the order of tens of $\mathrm{mW}$~\cite{A:Cochlear_Implants_System_design_integration_and_evaluation,A:MED-EL_CI,A:NFCI,A:WPT_strategies_for_implantable_bioelectronics}. 
However, the main disadvantage of this technology is that it cannot support high data rates which are required for neural prosthesis applications in order to achieve similar performance to that of human organs, such as the cochlea, under reasonable transmit power constraints~\cite{8052089,A:IR_neural_stimulation,A:Early_History_and_Challenges_of_Implantable_electronics}.
By also  taking into account the interference from other sources operating in the same frequency band, RF transmission is largely  rendered  a mediocre solution~\cite{islam2016review,liu2012optical,pinski2002interference}. 

Overcoming the above constraints in CIs  calls for investigations on  the feasibility of transcutaneous  wireless links that operate  in non-standardized frequency bands. 
In light of this  and due to the increased bandwidth availability, the partial transparency of skin at infrared wavelengths and the remarkably  high immunity to external interference,  the use of optical wireless communications (OWCs) has been recently introduced as an attractive alternative solution to the conventional approach (see~\cite{A:Emerging_OWC_Advances_and_Challenges,ghassemlooy2017visible,OWC_vs_RF_survey} and the references therein). 
The feasibility of OWCs for the transcutaneous optical link (TOL) has been experimentally validated in numerous contributions~\cite{abita2004transdermal,ackermann2006design,ackermann2008designing,gil2012feasibility,liu2012optical,liu2013system,liu2014vivo,liu2015bidirectional,okamoto2005development}.
Specifically, TOLs were used  in~\cite{abita2004transdermal} to establish transdermal high data rate communications between the internal and external units of a medical system. 
Additionally,  the TOL's primary design parameters and their interaction were quantified  in~\cite{ackermann2006design}, 
 whilst the authors in~\cite{ackermann2008designing}  reported the tradeoffs related to the design of OWC based CI.
In the same context, the authors in~\cite{liu2013system}  evaluated the performance of a TOL utilized for clinical neural recording purposes in terms of tissue thickness, data rate and transmit power. 
Furthermore, the characteristics of the receiver were investigated with regard to its size and the signal-to-noise-ratio~(SNR) maximization. 
In~\cite{liu2012optical}, a low power high data rate  optical wireless link for implantable transmission was presented and evaluated in terms of power consumption and bit-error-rate (BER), for a predetermined misalignment tolerance. 
Likewise,  the authors in~\cite{liu2014vivo} verified in-vivo the feasibility of TOL, proving that high data rates - even in the order of $100$ Mbps - can be delivered with a BER of $2\times 10^{-7}$  in the presence of misalignment fading. Nevertheless,  these  high data rates have been achieved at the cost of high power consumption~\cite{6679681}, which reached $2.1\text{ }mW$.
In contrast to~\cite{liu2014vivo},  the authors in~\cite{gil2012feasibility} presented novel experimental results of  direct and retroreflection transcutaneous link configurations and their findings verified the feasibility of highly effective and robust low-power consumption TOLs. 
Finally,  a bidirectional transcutaneous optical telemetric data link was reported  in~\cite{liu2015bidirectional}, whereas   a bidirectional transcutaneous optical data transmission system for artificial hearts,  allowing long-distance data communication with low electric power consumption, was described in~\cite{okamoto2005development}. 

\begin{table}[]
    \centering
    \caption{\textcolor{black}{Comparison between OWCIs and RFCIs.}}
    \resizebox{0.88\columnwidth}{!}{%
    \def\arraystretch{1.3}
    \begin{tabularx}{\textwidth}{|*{2}{Z|}}
        \hline
        \textbf{OWCIs} & \textbf{RFCIs}\\
        \hline
        \hline
        \textbf{Unprecedented increase} & Relatively limited \\
        \textbf{ in bandwidth} & bandwidth \\
        \hline
        \textbf{High achievable data rate} & Mediocre data rate \\
        \hline
        \textbf{Low interference: Solar and} & \\
        \textbf{ambient light are the main sources} & Very high interference from other \\
        \textbf{of interference and can be} & electronic and electrical appliances \\
        \textbf{easily mitigated or even cancelled} & \\
        \hline
        \textbf{Safer for the human health} & Of questionable safety \\
        \textbf{due to the low transmission power} & concerning the human body \\
        \hline
        \textbf{Low power demands} & High power demands \\
        \hline
        \multicolumn{2}{|c|}{Comparable cost} \\
        \hline
        \textbf{Mature technology that can exploit} & \\
        \textbf{the particularities of novel materials} & Mature technology \\
        \textbf{(e.g. graphene) to achieve improved} & that allows compact designs \\
        \textbf{features in the same integration area} & \\
        \hline
        Existing design guidelines from several & \textbf{Matured} \\
        standards (i.e. IEEE Std 1073.3.2-2000, & \textbf{standardization} \\
        IEC Laser Safety Standard, IrDA, & \textbf{(i.e. IEEE 802.15.4,} \\
        JEITA\! Cp-1223,\! European\! COST 1101,\! etc.) & \textbf{IEEE Std 1073.3.2-2000, etc.)} \\
        \hline
        Stringent alignment requirements & \textbf{Susceptible misalignment} \\
        \hline
    \end{tabularx}}
    \label{tab:OWC_vs_RF}
\end{table}

To the best of the authors' knowledge, the use of OWCs for establishing TOLs in CIs has not been reported in the open technical literature. 
Motivated by this, in the present work, we propose a novel system architecture which by employing OWCs improves the reliability as well as the spectral and power efficiency of the TOL. 
We refer to this architecture as optical wireless cochlear implant (OWCI). 
\textcolor{black}{A list of the advantages and disadvantages of the OWCI and RF cochlear implant (RFCI) is depicted in Table~\ref{tab:OWC_vs_RF}.}
It is noted that OWCI is in line with the advances in CIs, since the use of optics for the stimulation of the acoustic nerve has been recently proposed~\cite{A:IR_neural_stimulation,A:GaN_based_micro_LED_arrays_on_flexible_substrates_for_optical_cochlear_implants,kallweit2016optoacoustic,schultz2014optical,RICHTER201472,duke2009combined}. 
\textcolor{black}{In addition, we evaluate the feasibility and the capabilities of the proposed system whilst we provide design guidelines for the OWCI. 
Specifically, the technical contribution of this paper is summarized below:
\begin{itemize}
\item We establish an appropriate system model for the TOL, which includes all  different design parameters and their corresponding  interactions.
These parameters include the thickness of the skin through which the light is transmitted, the size of the integration area of the optics, the degree of transmitter (TX) and receiver (RX) misalignment, the efficiency of the optical system, and the emitter power.
\item We derive a comprehensive  analytical framework that quantifies and evaluates  the feasibility and effectiveness of the OWC link in the presence of misalignment fading. 
In this context, we provide novel closed-form expressions for the instantaneous and average SNR, which quantifies the received signal quality. 
Additionally, we evaluate the outage performance and the capacity of the optical link by deriving tractable analytic expressions for the outage probability and the spectral efficiency. 
These expressions take into account the technical characteristics of the link as well as the transcutaneous medium particularities; hence, they provide  meaningful insights into the behavior of the considered set up, which can be used as  design guidelines of such systems. 
Finally, we deduce novel expressions and simple bounds/approximations for the evaluation of the OWCI capacity. 
Interestingly, our findings reveal the superior reliability and effectiveness of  OWCI compared to the baseline CI RF solution.
\end{itemize} }

The remainder of this paper is organized as follows:  
The system model of the OWCI is described in Section~\ref{S:Sm}.
In Section~\ref{S:Pa}, we provide the analytical framework for quantifying the performance of the OWCI  by deriving novel analytic expressions for the instantaneous and average SNR, the outage probability, as well as the spectral efficiency and the channel capacity.
Respective numerical and simulation results, which illustrate the performance of the OWCI along with  useful related discussions are provided in Section~\ref{S:Nr}.
Finally, closing remarks and a summary of  the main findings of this contribution are presented in Section~\ref{S:C}.

\textit{Notations:} Unless stated otherwise, in this paper, $|\cdot|$ denotes  absolute value,  $\exp(\cdot)$  represents the exponential function, while  $\log_2(\cdot)$ and $\ln(\cdot)$ stand for the binary logarithm  and the natural logarithm, respectively.
In addition, $P\left(\mathcal{A}\right)$ denotes the probability of the event $\mathcal{A}$, whereas  $\mathrm{erf}(\cdot)$  denotes  the error function
and $\Phi(\cdot , \cdot , \cdot)$ represents the Lerch $\Phi$ function~\cite{B:Gra_Ryz_Book}.

\section{System model}
\label{S:Sm}
As illustrated in~Fig.~\ref{fig:SM}, the main  parts  of the  OWCI are the external unit, the propagation medium (skin) and the internal unit~\cite{dorman2004design,A:Cochlear_Implants_A_remarkable_past_and_a_briliant_future}.
The external unit consists of a microphone that captures the sound, followed by the sound processor responsible for the digitization and compression of the captured sound into coded signals.
The coded signal is then forwarded to the transmitter (TX), which conveys the data to the RX in the internal unit.
The internal unit consists of the RX which is embedded in the skull, a digital signal processing (DSP) unit, a stimulation (STM) unit and an electrode array implanted in the cochlea.
The DSP and STM units operate together in order to modulate the received signal into pulses of electrical current that are capable of stimulating the auditory nerve. 
Finally, the electrode array delivers the signal to  the auditory nerve, where it is  interpreted as sound by the brain.  

\begin{figure}[htbp]
\centering\includegraphics[width=1\linewidth,trim=0 0 0 0,clip=false]{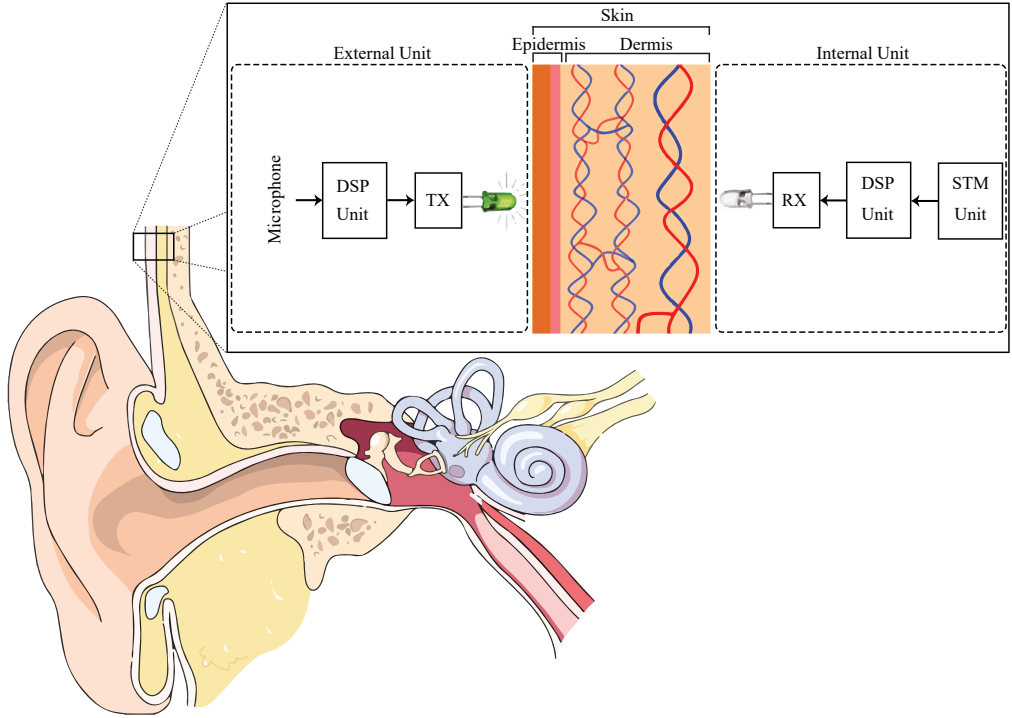}
\caption{System architecture of OWCI. In this figure `DSP unit' and `STM unit' denote the digital signal processing and stimulation units, respectively.}\label{fig:SM}
\end{figure}

We assume that the transmitted signal, $x$, is conveyed  over the wireless channel, $h$, with additive noise $n$. 
Therefore, the baseband equivalent received signal can be expressed~as~\cite{C:OWC_SystemModel_Capacity_and_Coding,zedini2015multihop,popoola2009bpsk}
\begin{align}
y = R h x + n
\label{Eq:Received_signal}
\end{align}
where $R$ denotes  the responsivity of the RX's photodiode,  which is given by 
\begin{align}
R=\eta\frac{q}{p v}.
\label{Eq:Responsivity} 
\end{align}
In~\eqref{Eq:Responsivity}, $\eta$ represents the quantum efficiency of the photodiode, $q$ is the electron charge, $v$ denotes the photons' frequency and $p$ is the Planck's constant. 

Likewise, the channel, $h$, can be expressed as~\cite{gil2012feasibility}
\begin{align}
h = h_l h_p 
\end{align}
where $h_l$ and $h_p$ denote, respectively, the deterministic channel coefficient, due to the propagation loss, and the stochastic process that models the geometric spread, due to the misalignment fading. 
\textcolor{black}{
Also, we assume a Gaussian spatial intensity profile of the beam waste on the RX plane at distance $\delta$ from the TX, $w_{\rm \delta}$, and a circular aperture of radius $\beta$. 
As a result, the stochastic term of the channel coefficient, $h_p$, represents the fraction of the collected power due to geometric spread with radial displacement $r$ from the origin of the detector.
Moreover, by assuming that the elevation and the horizontal displacement (sway) follow independent and identical Gaussian distributions, as in~\cite{A:Arnon2003Effects}, we observe that the radial displacement at the RX follows a Rayleigh distribution.
}

To this effect, the deterministic term of the channel coefficient can be evaluated as~\cite[eq.~10.1]{B:Transdermal_optical_communications}
\begin{align}
h_l=\exp\left(-\frac{1}{2}\alpha(\lambda) \delta\right) 
\end{align}
where $\alpha(\lambda)$  represents the skin attenuation coefficient at  wavelength $\lambda$ and  $\delta$ denotes the total dermis thickness, which is approximately equivalent to the TX-RX distance.
The skin attenuation coefficient can be derived from~\cite{bashkatov2011optical,A:graaff1993Opticalpropertiesofhumandermisinvitroandinvivo,A:Chang1996Effectsofcompressiononsofttissueopticalproperties,A:Simpson1998Near-infraredopticalproperties,A:Du2001Opticalpropertiesofporcineskindermis,A:Troy2001Opticalpropertiesofhumanskin,A:Bashkatov2005Opticalpropertiesofhumanskin}. 
It is worth noting that according to~\cite{B:Transdermal_optical_communications}, the term skin refers to the complex biological structure composed by three essential layers, namely stratum corneum, epidermis, and dermis.
Furthermore, as in~\cite{liu2012optical}, the TX and the RX are in contact with the outer (epidermal) and inner (adipose) side of the skin.
As a result, the distance between  TX and   RX can be approximated by the skin thickness.    

\color{black}
In the same context, the noise component can be expressed~as
\begin{align}
n =  n_{\rm s} + n_{\rm th}
\end{align}
where $n_{\rm s}$, and $n_{\rm th}$ represent the shot noise and the thermal noise, respectively, which can be modelled as zero-mean Gaussian processes~\cite{ghassemlooy2017visible}.
    
In more detail, the shot noise can be expressed~as
\begin{align}
n_{\rm s} =  n_b + n_{\rm DC}
\end{align}
where $n_b$, and $n_{\rm DC}$ represent  the background shot noise and dark current shot noise, respectively, which can be also modeled as  a zero-mean Gaussian processes with~variances 
\begin{align}
\label{Eq:s_b2}
\sigma_b^2  = 2 q R B P_b 
\end{align} 
and
\begin{align}
\label{Eq:s_b2b}
\sigma_{\rm DC}^2 = 2 q  B I_{\rm DC}
\end{align} 
respectively~\cite{gil2012feasibility,popoola2009bpsk}. It is noted that the term $B$ in \eqref{Eq:s_b2} and \eqref{Eq:s_b2b}, denotes the communication bandwidth, $R$ is the responsivity of the detector,  $P_b$ is the background optical power, and $I_{\rm DC}$ is the intensity of the dark current which  is generated by the photodetector in the absence of background light, and stems from thermally generated electron–hole pairs.
    
Finally, $n_{\rm th}$ represents the thermal noise, which is caused by thermal fluctuations of the electric carriers in the receiver circuit, and can be modelled as a zero-mean Gaussian process with variance $\sigma_{\rm th}^2$. As a result, since  $n_b$,  $n_{\rm DC}$, and $n_{\rm th}$ are zero-mean Gaussian processes~\cite{gil2012feasibility,ghassemlooy2017visible}, $n$ also follows a zero-mean Gaussian distribution with variance 
\begin{align}
\label{Eq:s2}
\sigma^2 = \sigma_{\rm b}^2 + \sigma_{\rm DC}^2 + \sigma_{\rm th}^2.
\end{align}
\color{black}

\section{Performance analysis}
\label{S:Pa}
In this section, we provide the mathematical framework that quantifies the performance of the proposed system. 
In this context, we derive closed-form expressions for the instantaneous and average SNR, outage probability and ergodic spectral efficiency and capacity. 
Capitalizing on them, we also derive simple and tight   lower bounds for the ergodic capacity that provide useful insights. 
Hence, these expressions can be used to validate the feasibility of the system as well as to introduce design guidelines for its effective~utilization.

\subsection{Evaluation of the average SNR}
Based on~\eqref{Eq:Received_signal}-\eqref{Eq:s2}, the instantaneous SNR can be obtained~as~\cite{7862126}
\begin{align}
\gamma = \frac{R^2 \exp\left(-\alpha(\lambda) \delta\right) h_p^2 P_s}{2 q R B P_b + 2 q  B I_{\rm DC} + \sigma_{\rm th}^2}
\label{Eq:SNR}
\end{align}
or equivalently
\begin{align}
\gamma = \frac{R^2 \exp\left(-\alpha(\lambda) \delta\right) h_p^2 \tilde{P}_s}{2 q R P_b + 2 q  I_{\rm DC} + N_0}
\label{Eq:SNR_eq}
\end{align}
where $P_s$ denotes the average optical power of the transmitted~signal, whereas $\tilde{P}_s$ and $N_0$  represent the signal and noise optical power spectral density (PSD), respectively.

The  average SNR in the considered set up is derived in the following theorem. 
\begin{theorem}
The average SNR in the considered system can be expressed as
\begin{align}
\tilde{\gamma}=\frac{R^2 \exp\left(-\alpha(\lambda) \delta\right) \tilde{P}_s}{2 q R  P_b + 2 q   I_{\rm DC} +N_0} \frac{\xi A_0^2}{\xi+2}
\label{Eq:AverageSNR}
\end{align}
where ~\cite{A:Outage_Capacity_for_FSO_with_pointing_errors}
\begin{align}
A_0 &= [\erf\left(\upsilon\right)]^2
\label{Eq:A_0}
\end{align}
denotes the    fraction of the collected power in case of zero radial displacement, 
with   
\begin{align}
\upsilon = \frac{\sqrt{\pi}\beta}{\sqrt{2}w_{\rm \delta}}.
\label{Eq:v}
\end{align}
In the above equations, $\beta$ and $w_{\rm \delta}$ denote, respectively, the radius of the RX's circular aperture and the beam waste (radius calculated at $e^{-2}$) on the RX plane at distance $\delta$ from the TX. 
Moreover,   $\xi$ is the square ratio of the equivalent beam radius, $w_{\rm eq}$, and the pointing error displacement standard deviation at the RX, namely 
\begin{align}
\xi=\frac{w_{\rm eq}^2}{4\sigma_s^2}
\label{Eq:xi}
\end{align}
with $\sigma_s^2$ denoting  the pointing error displacement (jitter) variance  at the RX, whereas
\begin{align}
w_{\rm eq}^2 &= w_{\rm \delta}^2\frac{\sqrt{\pi}\erf\left(\upsilon\right)}{2\upsilon\exp\left(-\upsilon^2\right)}.
\label{Eq:wZeq}
\end{align}
\end{theorem}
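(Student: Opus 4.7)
The plan is to take the expectation of the instantaneous SNR in \eqref{Eq:SNR_eq} over the only random quantity it contains, namely the misalignment fading coefficient $h_p$. By linearity and the fact that every other quantity in \eqref{Eq:SNR_eq} is deterministic, one has
\begin{align}
\tilde{\gamma}=\mathbb{E}[\gamma]=\frac{R^2 \exp\left(-\alpha(\lambda) \delta\right) \tilde{P}_s}{2 q R  P_b + 2 q   I_{\rm DC} +N_0}\,\mathbb{E}[h_p^2],
\end{align}
so the entire problem reduces to evaluating the second moment $\mathbb{E}[h_p^2]$.

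First I would recall the geometric derivation of $h_p$ for a Gaussian beam of waist $w_{\rm \delta}$ impinging on a circular aperture of radius $\beta$ with radial displacement $r$. A direct integration of the intensity over the aperture yields an exact but intractable expression; however, as is standard in the pointing-error literature~\cite{A:Outage_Capacity_for_FSO_with_pointing_errors}, this can be tightly approximated by
\begin{align}
h_p(r;\delta)\approx A_0\exp\!\left(-\frac{2r^2}{w_{\rm eq}^2}\right),
\end{align}
where $A_0$ and $w_{\rm eq}$ are precisely the quantities given in \eqref{Eq:A_0} and \eqref{Eq:wZeq}. This step is where the definitions of $\upsilon$, $A_0$ and $w_{\rm eq}$ enter the proof and it is, in my view, the main conceptual obstacle: showing that the collected-power integral over the circular aperture is well approximated by this Gaussian-in-$r$ form requires a Taylor expansion of the exact expression around $r=0$ and a matching of the value and curvature, after which $w_{\rm eq}$ emerges as in~\eqref{Eq:wZeq}.

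Next I would use the assumption stated in the system model that the horizontal and elevation displacements at the RX are i.i.d.\ zero-mean Gaussians with variance $\sigma_s^2$, so that $r$ is Rayleigh distributed with PDF $f_r(r)=(r/\sigma_s^2)\exp(-r^2/(2\sigma_s^2))$. A change of variables from $r$ to $h_p$ via the relation above then yields the well-known algebraic density
\begin{align}
f_{h_p}(h)=\frac{\xi}{A_0^{\xi}}\,h^{\xi-1},\qquad 0\le h\le A_0,
\end{align}
with $\xi=w_{\rm eq}^2/(4\sigma_s^2)$ as in \eqref{Eq:xi}. The change of variables is where the parameter $\xi$ naturally appears as the exponent.

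Finally I would compute the second moment directly,
\begin{align}
\mathbb{E}[h_p^2]=\int_{0}^{A_0} h^2\,\frac{\xi}{A_0^{\xi}}\,h^{\xi-1}\,dh=\frac{\xi}{A_0^{\xi}}\cdot\frac{A_0^{\xi+2}}{\xi+2}=\frac{\xi A_0^2}{\xi+2},
\end{align}
and substitute this back into the expression for $\tilde{\gamma}$ to recover \eqref{Eq:AverageSNR}. The remaining steps after the PDF has been established are elementary; the non-trivial part of the argument is the Gaussian approximation of the aperture integral that produces the closed-form $A_0$ and $w_{\rm eq}$.
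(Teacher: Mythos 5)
Your proposal is correct and follows essentially the same route as the paper's Appendix A: the Gaussian approximation $h_p \approx A_0\exp(-2r^2/w_{\rm eq}^2)$ from the pointing-error literature, the Rayleigh-distributed radial displacement, the resulting power-law PDF $f_{h_p}(x)=\xi A_0^{-\xi}x^{\xi-1}$, and a direct moment computation. The only cosmetic difference is that you evaluate $\mathbb{E}[h_p^2]$ directly from the PDF of $h_p$, whereas the paper first derives the CDF and PDF of $h_p^2$ and then integrates; the two computations are identical up to a change of variables.
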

\begin{proof}
The proof is provided in Appendix~A. 
\end{proof}

\textcolor{black}{From~\eqref{Eq:AverageSNR}, it is evident that the average SNR depends on the transmission PSD, the skin particularities, namely skin attenuation and thickness, the RX's characteristics, and the intensity of the misalignment fading. 
It is also noted  that since the skin attenuation is a function of the wavelength, the average SNR is also a function of the wavelength. }

\subsection{Evaluation of the ergodic spectral efficiency}
In this section, we quantify the OWCI capability in terms of spectral efficiency.
In this context,  a novel closed-form expression for the ergodic spectral efficiency of the proposed system is derived in the following theorem.

\begin{theorem}
The ergodic spectral efficiency of the considered set up can be expressed as
\begin{align}
C = \frac{1}{2} \log_2\left(1+ \mathcal{B}(\lambda) A_0^2\right) - \frac{1}{2} \frac{A_0^{2} \mathcal{B}(\lambda)}{\ln(2)} \Phi\left(-A_0^2 \mathcal{B}(\lambda), 1, 1+\frac{\xi}{2}\right) 
\label{Eq:ergodic_cap}
\end{align}
where
\begin{align}
\mathcal{B}(\lambda)=\frac{\psi R^2 \exp\left(-\alpha(\lambda) \delta\right)\tilde{P}_s}{2 q R P_b + 2 q  I_{\rm DC} + N_0}.
\label{Eq:B}
\end{align}
\end{theorem}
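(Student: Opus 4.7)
The plan is to compute $C = \tfrac{1}{2}\,\mathrm{E}[\log_2(1+\mathcal{B}(\lambda) h_p^2)]$, which is the IM/DD ergodic-capacity expression once all deterministic quantities in the instantaneous SNR~\eqref{Eq:SNR_eq} are absorbed into the constant $\mathcal{B}(\lambda)$ defined in~\eqref{Eq:B}. The only remaining randomness is the misalignment coefficient $h_p$, which under the Gaussian-beam and Rayleigh-displacement assumptions from Section~\ref{S:Sm} admits the Farid--Hranilovic density $f_{h_p}(h) = \xi h^{\xi-1}/A_0^{\xi}$ on $[0, A_0]$, with $\xi$ and $A_0$ as in~\eqref{Eq:xi} and~\eqref{Eq:A_0}.

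First I would substitute this density into the expectation and apply the change of variables $u = (h/A_0)^2$ to recast the integral in the canonical form $\tfrac{\xi}{2\ln 2}\int_0^1 u^{\xi/2-1}\ln(1+au)\,du$, with $a := A_0^2 \mathcal{B}(\lambda)$. Then I would integrate by parts, taking $dv = u^{\xi/2-1}\,du$ and $w = \ln(1+au)$: the boundary term collapses to $\tfrac{1}{2}\log_2(1+a)$, which already matches the first term of~\eqref{Eq:ergodic_cap}, while the residual integral equals $\tfrac{a}{\ln 2}\int_0^1 u^{\xi/2}/(1+au)\,du$.

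The decisive step is recognizing this residual integral as a Lerch transcendent. Expanding $1/(1+au) = \sum_{n\ge 0}(-au)^n$ and integrating term by term yields $\sum_{n\ge 0}(-a)^n/(n+1+\xi/2)$, which is precisely $\Phi(-a,1,1+\xi/2)$ by the defining series of $\Phi$. Reassembling the two contributions then produces~\eqref{Eq:ergodic_cap}.

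The main obstacle is handling the range $a \ge 1$: the geometric series for $1/(1+au)$ diverges on part of $[0,1]$, so the term-by-term integration is, strictly speaking, valid only for $|a|<1$. I would dispose of this by invoking the analytic continuation of the Lerch transcendent through its standard integral representation, which agrees with the series wherever the latter converges and so extends the closed-form identity to all physically meaningful values of $\mathcal{B}(\lambda)$. Beyond that caveat, the derivation is a routine integration-by-parts exercise.
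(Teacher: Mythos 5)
Your derivation mirrors the paper's Appendix B essentially step for step: the same density $f_{h_p}(h)=\xi h^{\xi-1}/A_0^{\xi}$ on $[0,A_0]$, the same integration by parts isolating $\tfrac{1}{2}\log_2\left(1+A_0^2\mathcal{B}(\lambda)\right)$, and the same identification of the residual integral $\int_0^1 u^{\xi/2}/(1+au)\,\mathrm{d}u$ with $\Phi\left(-a,1,1+\tfrac{\xi}{2}\right)$, where the paper simply cites Gradshteyn--Ryzhik while you justify the identity via the defining series plus analytic continuation --- a welcome extra care for $a\geq 1$. The only blemish is a bookkeeping slip: once the outer factor $\tfrac{1}{2}$ and the Jacobian of $u=(h/A_0)^2$ are both accounted for, the canonical form carries the prefactor $\tfrac{\xi}{4\ln 2}$ rather than $\tfrac{\xi}{2\ln 2}$, although the boundary term $\tfrac{1}{2}\log_2(1+a)$ and the final reassembly you state correspond to the correct value, so the conclusion stands.
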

\begin{proof}
The proof is provided in Appendix~B. 
\end{proof}
 
It is recalled that  the Shannon ergodic spectral efficiency in $C = \tfrac{1}{2} \mathbb{E}\left[\log_2\left(1 + \psi \gamma\right)\right]$ is a valid   exact expression for deriving the ergodic  spectral efficiency for the heterodyne detection technique, while it acts as a lower bound for  the IM/DD scheme~\cite[eq. (26)]{A:OnTheCapacityOfFSOIntensityChannels} and~\cite[eq. (7.43)]{B:Advanced_OWC_Systems}. 
Hence,   the derived ergodic spectral efficiency in~\eqref{Eq:ergodic_cap} is correspondingly an exact solution for the heterodyne detection technique and   a lower bound for the IM/DD technique.

Furthermore, it is noticed that  \eqref{Eq:ergodic_cap} is an  exact closed-form expression that can be straightforwardly computed using popular software packages. 
Based on this,  we  can also derive a simple  lower bound for the ergodic spectral efficiency that provide useful insights on the impact of the involved parameters. 

\begin{proposition}
The ergodic spectral efficiency can be lower bounded as follows: 
\begin{align}
C > \frac{1}{2} \log_2\left(1+ \mathcal{B}(\lambda) A_0^2\right) -  \frac{1}{\xi \ln(2)} 
\label{Eq:ergodic_cap_b}
\end{align}
\end{proposition}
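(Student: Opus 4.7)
The plan is to exploit the integral representation of the Lerch $\Phi$ function to rewrite the subtracted term of the ergodic spectral efficiency in a form that can be bounded by elementary means. Specifically, setting $\mathcal{B}\triangleq\mathcal{B}(\lambda) A_0^2$ for brevity, the theorem gives
\begin{align*}
C=\tfrac{1}{2}\log_2\!\left(1+\mathcal{B}\right)-\frac{\mathcal{B}}{2\ln(2)}\,\Phi\!\left(-\mathcal{B},1,1+\tfrac{\xi}{2}\right),
\end{align*}
so the proposition is equivalent to the purely analytic inequality $\tfrac{\mathcal{B}}{2}\,\Phi\!\left(-\mathcal{B},1,1+\tfrac{\xi}{2}\right)<\tfrac{1}{\xi}$, which involves none of the optical/physical parameters beyond $\mathcal{B}\geq 0$ and $\xi>0$.

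The key step is to invoke the integral representation $\Phi(z,1,a)=\int_0^1 t^{a-1}/(1-zt)\,dt$, valid by a direct geometric-series expansion whenever the integrand is integrable. Applying it with $z=-\mathcal{B}$ and $a=1+\xi/2$ yields $\Phi(-\mathcal{B},1,1+\xi/2)=\int_0^1 t^{\xi/2}/(1+\mathcal{B}t)\,dt$. Multiplying by $\mathcal{B}/2$ and using the elementary identity $\mathcal{B}t/(1+\mathcal{B}t)=1-1/(1+\mathcal{B}t)$, the expression splits as
\begin{align*}
\frac{\mathcal{B}}{2}\,\Phi\!\left(-\mathcal{B},1,1+\tfrac{\xi}{2}\right)=\frac{1}{2}\int_0^1 t^{\xi/2-1}\,dt-\frac{1}{2}\int_0^1 \frac{t^{\xi/2-1}}{1+\mathcal{B}t}\,dt=\frac{1}{\xi}-\frac{1}{2}\int_0^1 \frac{t^{\xi/2-1}}{1+\mathcal{B}t}\,dt.
\end{align*}

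Since the remaining integral is strictly positive for $\mathcal{B}\geq 0$ and $\xi>0$, the desired bound $\tfrac{\mathcal{B}}{2}\Phi(-\mathcal{B},1,1+\xi/2)<1/\xi$ follows immediately. Dividing by $\ln(2)$ and substituting back into the expression for $C$ delivers \eqref{Eq:ergodic_cap_b}.

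The only real obstacle is bookkeeping: verifying that the integral representation applies with $z=-\mathcal{B}\leq 0$ (it does, since the integrand is nonnegative and bounded on $[0,1]$, so the series-integral interchange is justified by monotone convergence), and checking that the exponent $\xi/2-1$ still yields an integrable singularity at $t=0$, which holds for all $\xi>0$ by the standard $\int_0^1 t^{\xi/2-1}dt=2/\xi$ computation. No capacity-theoretic or channel-specific argument is needed beyond the closed form already established in the preceding theorem.
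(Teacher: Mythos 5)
Your proof is correct and follows essentially the same route as the paper: both start from the integral representation of the Lerch $\Phi$ function (your $\int_0^1 t^{a-1}/(1-zt)\,\mathrm{d}t$ is the paper's $\int_0^\infty e^{-ay}/(1-ze^{-y})\,\mathrm{d}y$ under $t=e^{-y}$) and both obtain the bound by discarding the same positive remainder, i.e.\ using $\mathcal{B}t/(1+\mathcal{B}t)<1$, which yields the elementary integral $\tfrac{1}{2}\int_0^1 t^{\xi/2-1}\,\mathrm{d}t=1/\xi$. Your version is marginally cleaner in that it exhibits the discarded term explicitly as a strictly positive integral, making the strictness of the inequality immediate.
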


\begin{proof}
The proof is provided in Appendix C. 
\end{proof}

It is noted here that the above lower bound  is particularly tight; as a result, equation  \eqref{Eq:ergodic_cap_b} can also considered as a highly accurate bounded approximation of the ergodic spectral efficiency. 

\textcolor{black}{Based on the above and with the aid of~\eqref{Eq:AverageSNR}, it follows that \eqref{Eq:ergodic_cap} can be equivalently rewritten~as
\begin{align}
C =  \frac{1}{2} \log_2\left(1+ \psi \frac{\xi+2}{\xi}\tilde{\gamma}\right) - \frac{\psi}{2 \ln(2)} \frac{{\xi+2}}{{\xi}} \tilde{\gamma} \Phi\left(-\psi \frac{\xi+2}{\xi}\tilde{\gamma}, 1, 1+\frac{\xi}{2}\right)
\label{Eq:ergodic_cap2}
\end{align}
which with the aid of Proposition 1 it can be lower bounded as follows: 
\begin{align}
C =  \frac{1}{2} \log_2\left(1+ \psi \frac{\xi+2}{\xi}\tilde{\gamma}\right) - \frac{1}{\xi \ln(2)}.  
\label{Eq:ergodic_cap2_b}
\end{align}
It is noticed in~\eqref{Eq:ergodic_cap2} that the ergodic spectral efficiency depends on the level of misalignment, modeled by the term $\xi$, and the average SNR.}

\subsection{Evaluation of the ergodic capacity}
Based on the presented channel model, the optical transcutaneous link is wavelength selective. 
Also, the bandwidth over which the channel transfer function remains virtually constant is hereby denoted as $\Delta f$.
Thus, the capacity can be obtained by dividing the total bandwidth, $B$, into $K$ narrow sub-bands and  then summing up  the individual capacities~\cite{oppermann2005uwb}. 
The $i^{th}$ sub-band is centered around the wavelength $\lambda_i$, with $i\in[1, K]$, and it has width $\Delta f$. 
If the subband width is sufficiently small, the sub-channel appears as wavelength non-selective. 
As a consequence, by assuming full CSI knowledge at both the TX and the RX, the resulting capacity in $\mathrm{bits/s}$ can be expressed~as
\begin{align}
C_{\rm B} = \frac{1}{2} \sum_{\rm i=1}^{K}\Delta f \left(\log_2\left(1+ \mathcal{B}(\lambda_i) A_0^2\right) - \frac{A_0^{2} \mathcal{B}(\lambda_i)}{\ln(2)} \Phi\left(-A_0^2 \mathcal{B}(\lambda_i), 1, 1+\frac{\xi}{2}\right)\right).
\label{Eq:Capacity}
\end{align}
Of note,  for  the special case in which $B\leq \Delta f$, equation \eqref{Eq:Capacity} becomes
\begin{align}
C_B= \frac{1}{2} B \log_2\left(1+ \mathcal{B(\lambda)} A_0^2\right) - \frac{1}{2} B \frac{A_0^{2} \mathcal{B(\lambda)}}{\ln(2)} \Phi\left(-A_0^2 \mathcal{B(\lambda)}, 1, 1+\frac{\xi}{2}\right)
\label{Eq:Capacity2}
\end{align}   
which is   a common case since the values of  $\Delta f$ are in the order of $\mathrm{GHz}$,  whereas those of  $B$ are in the order of $\mathrm{MHz}$.   Based on this and  with the aid of Proposition 1, equations \eqref{Eq:Capacity} and \eqref{Eq:Capacity2} can be lower bounded as follows: 
\begin{align}
C_{\rm B} > \frac{1}{2} \sum_{\rm i=1}^{K}\Delta f \log_2\left(1+ \mathcal{B}(\lambda_i) A_0^2\right) - \frac{1}{\xi \ln(2)}
\label{Eq:Capacity1}
\end{align}
and
\begin{align}
C_B > \frac{1}{2} B \log_2\left(1+ \mathcal{B(\lambda)} A_0^2\right) - \frac{B}{\xi \ln(2)}  
\label{Eq:Capacity22}
\end{align}   
respectively. 
As in the previous scenarios, the above inequalities are particularly tight; as a result, they can be also considered as simple and accurate approximations. 

\subsection{Evaluation of the outage probability}
The reliability of the OWCI can be additionally evaluated in terms of  the corresponding outage performance. To this end, we derive  the outage probability of the considered set up in the presence of misalignment fading which is a critical factor in optical wireless communication systems.

\begin{theorem}
The outage probability of the considered system can be expressed as
\begin{align}
P_o(\gamma_{\rm th}) =
\left\{ 
\begin{array}{c c}
\frac{1}{A_0^{\xi}} \left(\frac{2 q R  P_b + 2 q   I_{\rm DC} + N_0}{R^2 \exp\left(-\alpha(\lambda) \delta\right) \tilde{P}_s} \gamma_{\rm th} \right)^{\tfrac{\xi}{2}}, 
& \gamma_{\rm th} \leq \frac{R^2 \exp\left(-\alpha(\lambda) \delta\right) A_0^2 \tilde{P}_s}{2 q R  P_b + 2 q   I_{\rm DC} + N_0},
 \\
 1, & \text{ otherwise. }
\end{array}
\right.
\label{Eq:OP}
\end{align}
\end{theorem}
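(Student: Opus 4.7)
The plan is to unwind the outage event $\{\gamma \leq \gamma_{\rm th}\}$ into a statement about the misalignment coefficient $h_p$ and then invoke the CDF of $h_p$ that is implicit in the setup used for Theorem~1.

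First, I would start from the instantaneous SNR expression in~\eqref{Eq:SNR_eq}, which is monotone increasing in $h_p^2$ (all other factors being deterministic). Solving $\gamma \leq \gamma_{\rm th}$ for $h_p^2$ yields the equivalent event
\begin{align*}
h_p^2 \leq \frac{2qRP_b + 2qI_{\rm DC} + N_0}{R^2 \exp(-\alpha(\lambda)\delta)\tilde{P}_s}\,\gamma_{\rm th} \;\triangleq\; \Gamma(\gamma_{\rm th}),
\end{align*}
so that $P_o(\gamma_{\rm th}) = P\bigl(h_p \leq \sqrt{\Gamma(\gamma_{\rm th})}\bigr)$.

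Next, I would bring in the distribution of $h_p$. Under the Gaussian beam profile, circular aperture of radius $\beta$, and Rayleigh-distributed radial displacement already assumed in Section~2 (and used in Appendix~A to prove Theorem~1), $h_p$ is supported on $[0, A_0]$ with CDF $F_{h_p}(u) = (u/A_0)^{\xi}$ for $0 \leq u \leq A_0$ and $F_{h_p}(u) = 1$ for $u > A_0$, where $A_0$ and $\xi$ are as defined in \eqref{Eq:A_0} and \eqref{Eq:xi}. This standard Farid--Hranilovic form follows from a change of variables from the Rayleigh $r$ to $h_p$ in the expression for collected power (and in fact is the very density that Appendix~A uses to compute $\mathbb{E}[h_p^2]$); I would either quote it from the theorem's proof or re-derive it in one line.

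Substituting $u = \sqrt{\Gamma(\gamma_{\rm th})}$ immediately gives
\begin{align*}
P_o(\gamma_{\rm th}) = \frac{1}{A_0^{\xi}}\,\Gamma(\gamma_{\rm th})^{\xi/2}
\end{align*}
whenever $\sqrt{\Gamma(\gamma_{\rm th})} \leq A_0$, which rearranges precisely to the threshold condition $\gamma_{\rm th} \leq R^2\exp(-\alpha(\lambda)\delta)A_0^2 \tilde{P}_s / (2qRP_b + 2qI_{\rm DC} + N_0)$ stated in \eqref{Eq:OP}; otherwise, because $h_p$ is almost surely bounded by $A_0$, the event $\{h_p \leq \sqrt{\Gamma(\gamma_{\rm th})}\}$ is the entire sample space and $P_o = 1$. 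The only nontrivial step is the case split, which is a direct consequence of the bounded support of $h_p$; the algebraic manipulations are routine, so I do not expect a real obstacle beyond being careful with the two regimes of $\gamma_{\rm th}$.
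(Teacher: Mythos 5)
Your proposal is correct and follows essentially the same route as the paper's Appendix~D: the outage event $\{\gamma \leq \gamma_{\rm th}\}$ is rewritten as $\{h_p^2 \leq \Gamma(\gamma_{\rm th})\}$ using the monotonicity of~\eqref{Eq:SNR_eq} in $h_p^2$, and the result follows by evaluating the CDF of $h_p^2$ (equivalently $F_{h_p}(\sqrt{\cdot}\,)$ from the Farid--Hranilovic model already established for Theorem~1), with the case split coming from the bounded support $[0, A_0]$. The only cosmetic difference is that the paper first passes from the rate threshold $r_{\rm th}$ to the SNR threshold via $\gamma_{\rm th} = (2^{2 r_{\rm th}}-1)/\psi$, which you implicitly take as given.
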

\begin{proof}
The proof is provided in Appendix~D. 
\end{proof}

From~\eqref{Eq:OP}, it is evident that the maximum spectral efficiency is constrained by the PSD of the noise components, the RX's responsivity, the pathloss, and the intensity of the misalignment effect. 
Furthermore, we observe that as the transmission signal PSD increases, the spectral efficiency constraint relaxes. 

\textcolor{black}{According to~\eqref{Eq:AverageSNR} and~\eqref{Eq:OP}, the outage probability can be alternatively expressed as
\begin{align}
P_o(r_{\rm th}) =
\left\{ 
\begin{array}{c c}
\left(\frac{\xi}{\xi+2} \frac{\gamma_{\rm th}}{\tilde{\gamma}}\right)^{\frac{\xi}{2}}, & \gamma_{\rm th} \leq \frac{\xi+2}{\xi}\tilde{\gamma}, \\
1, & \text{ otherwise }
\end{array}
\right.
\label{Eq:OPvsSNR}
\end{align}
which indicates that for given $\gamma_{\rm th}$ and $\tilde{\gamma}$,  the outage performance of OWCI improves as $A_0$ increases. 
Moreover, it is shown that for a fixed $\gamma_{\rm th}$ and $A_0$,  the outage probability decreases as $\tilde{\gamma}$ increases. 
Finally, it is observed that as $\gamma_{\rm th}$ increases, the outage probability also~increases.}

\subsection{Pointing error displacement tolerance in practical OCWI designs}
It is evident that the derived analytic expressions in the previous sections allow the determination of the pointing error displacement (jitter) and its variance for given quality of service requirements. To this end, it readily follows from \eqref{Eq:OP} that
\begin{equation} \label{New_1a}
\begin{array}{c c}
\sigma_s^2 = \frac{w_{\rm eq}^2}{8 \log_{ \mathcal{H}}(P_o(\gamma_{\rm th}) )} = \frac{w_{\rm eq}^2 \ln (\mathcal{H})}{8 \ln (P_o(\gamma_{\rm th}) )}, 
& \qquad \gamma_{\rm th} \leq \frac{R^2 \exp\left(-\alpha(\lambda) \delta\right) A_0^2 \tilde{P}_s}{2 q R  P_b + 2 q   I_{\rm DC} + N_0}
\end{array}
\end{equation}
and
\begin{equation} \label{New_1b}
\begin{array}{c c}
\sigma_s = \frac{w_{\rm eq} \sqrt{\ln (\mathcal{H})}}{2 \sqrt{2\ln (P_o(\gamma_{\rm th}) )}} = \frac{w_{\rm eq}}{2\sqrt{2}} \sqrt{\frac{\alpha(\lambda) \delta + \ln ( \gamma_{\rm th}) + \ln (2 q R P_b + 2 q I_{\rm DC} + N_0) - 2\ln (A_0) - 2\ln (R) - \ln (\tilde{P}_s)}{\ln (P_o(\gamma_{\rm th}) )}}
\end{array}
\end{equation}
where 
\begin{equation} \label{New_2}
\mathcal{H} = \frac{2 q R P_b + 2 q I_{\rm DC} + N_0}{A_0^2 R^2 \exp\left(-\alpha(\lambda) \delta\right) \tilde{P}_s} \gamma_{\rm th}.
\end{equation}

The above expressions provide useful insights into the practical design of OCWI, since they allow the quantification of the amount of pointing error displacement, and its variance, that can be tolerated with respect to certain outage probability requirements.
More specifically, the system parameters, in the practical designs, will be initially selected in order to achieve a target outage probability that is slightly better than the target one, under the assumption of no pointing error displacement.
Then, using \eqref{New_1b} and setting in it the exact target outage probability  will determine the amount of pointing error displacement that can be tolerated for this particular system and quality of service requirements.
Conversely, for pointing error displacement levels that constrain the target quality of service requirements, these equations can assist in determining the required countermeasures for the incurred detrimental effects.

Finally, the above expressions allow us to express the ergodic capacity representation in \eqref{Eq:Capacity2} in terms of the corresponding outage probability, namely
\begin{align}
C_B= \frac{B}{2} \log_2\left(1+ \mathcal{B(\lambda)} A_0^2\right) - \frac{B A_0^{2} \mathcal{B(\lambda)}}{2 \ln(2)} \Phi\left(-A_0^2 \mathcal{B(\lambda)}, 1, 1+\frac{\ln (P_o(\gamma_{\rm th}) )}{\ln (\mathcal{H})}\right)
\label{Eq:Capacity2b}
\end{align}
which in turn allows the quantification of the required ergodic capacity for certain design parameters and target outage probability requirements. 
In addition, further insights on the role of the involved parameters on the overall system performance can be obtained by tightly lower bounding \eqref{Eq:Capacity2b} with the aid of Proposition 1, namely 
\begin{align}
C_B > \frac{B}{2} \log_2\left(1+ \mathcal{B(\lambda)} A_0^2\right) -  \frac{B\ln (\mathcal{H})}{2 \ln(2)\ln (P_o(\gamma_{\rm th})}.  
\label{Eq:Capacity2b_d}
\end{align}

It is noted that similar expressions can be deduced if  \eqref{Eq:OP} is solved with respect to other involved design parameters. 

\section{Numerical results \& discussion}\label{S:Nr}
In this section, we evaluate the feasibility and effectiveness of OWCI by providing analytical  results and insights  for different  scenarios of interest as well as sets of design parameters.  
The validity of the offered analytical results is extensively verified by respective results from Monte Carlo simulations. 
Unless otherwise   stated,  we henceforth assume that the photodiode (PD) effective area, $A=\pi\beta^2$,  with $\beta$ denoting the radius of the RX's circular aperture, is $1\text{ }\mathrm{mm}^2$. 
Also, we assume that   the divergence angle, $\theta$, is~$20^o$, whilst the skin thickness, $\delta$, is assumed to be equal to~$4\text{ }\mathrm{mm}$ and the noise optical PSD, $N_{\rm 0}$, is set to~$\left(1.3 \text{ }\mathrm{pA/\sqrt{Hz}}\right)^2$~\cite{D:Max3657}. 
The beam waist at distance $\delta$ is determined ~by $w_{\rm \delta}=\delta\tan\left(\theta / 2\right)$, and the pointing error displacement variance, $\sigma_{\rm s}$, is assumed to be~$0.5\text{ }\mathrm{mm}$.
Moreover, according to~\cite{gil2012feasibility} the TOL exhibits remarkably  high immunity to external interference.
In addition, in the following analysis we ignore the noise from external light sources by considering a dark-shielded receiver-to-skin interface.
As a result, the background optical power can be omitted, i.e., $P_{\rm b}=0$.
Furthermore, the PD's dark current, $I_{\rm DC}$, is set to~$0.05\text{ }\mathrm{nA}$ whereas the quantum efficiency of the PD,~$\eta$, is  $0.8$~\cite{B:Transdermal_optical_communications}. 
Finally, the transmitted signal power, $P_{\rm s}$, and PSD, $\tilde{P}_{\rm s}$, are  $0.1\text{ }\mathrm{\mu W}$ and $0.01\text{ }\mathrm{\mu W/MHz}$, respectively, the communication bandwidth, $B$, is $10\text{ }\mathrm{MHz}$, the data rate threshold, $r_{\rm th}$, is set to $=1\text{ }\mathrm{bits/s/Hz}$, the operation wavelength, $\lambda$, is $=1100\text{ }\mathrm{nm}$ and the square ratio between the equivalent beam radius and the pointing error displacement standard deviation (SD) at the RX, $\xi$, is assumed to be $1$.

\textcolor{black}{The remainder of this section is organized as follows: 
The impact of the skin thickness on the performance of the OWCI is described in Section~\ref{Ss:SkinThickness}, while 
 the degradation of the TOL due to the misalignment fading is presented in Section~\ref{Ss:Misalignment}.
The joint effect of the skin thickness and the misalignment  is  provided in Section~\ref{Ss:SkinThicknessAndMisalignment}.
Finally, the impact of various design parameters is analyzed in Section~\ref{Ss:DesignParameters}.}

\begin{figure}[htbp]
\begin{subfigure}[t]{0.48\textwidth}
\centering\includegraphics[width=1\linewidth,trim=0 0 0 0,clip=false]{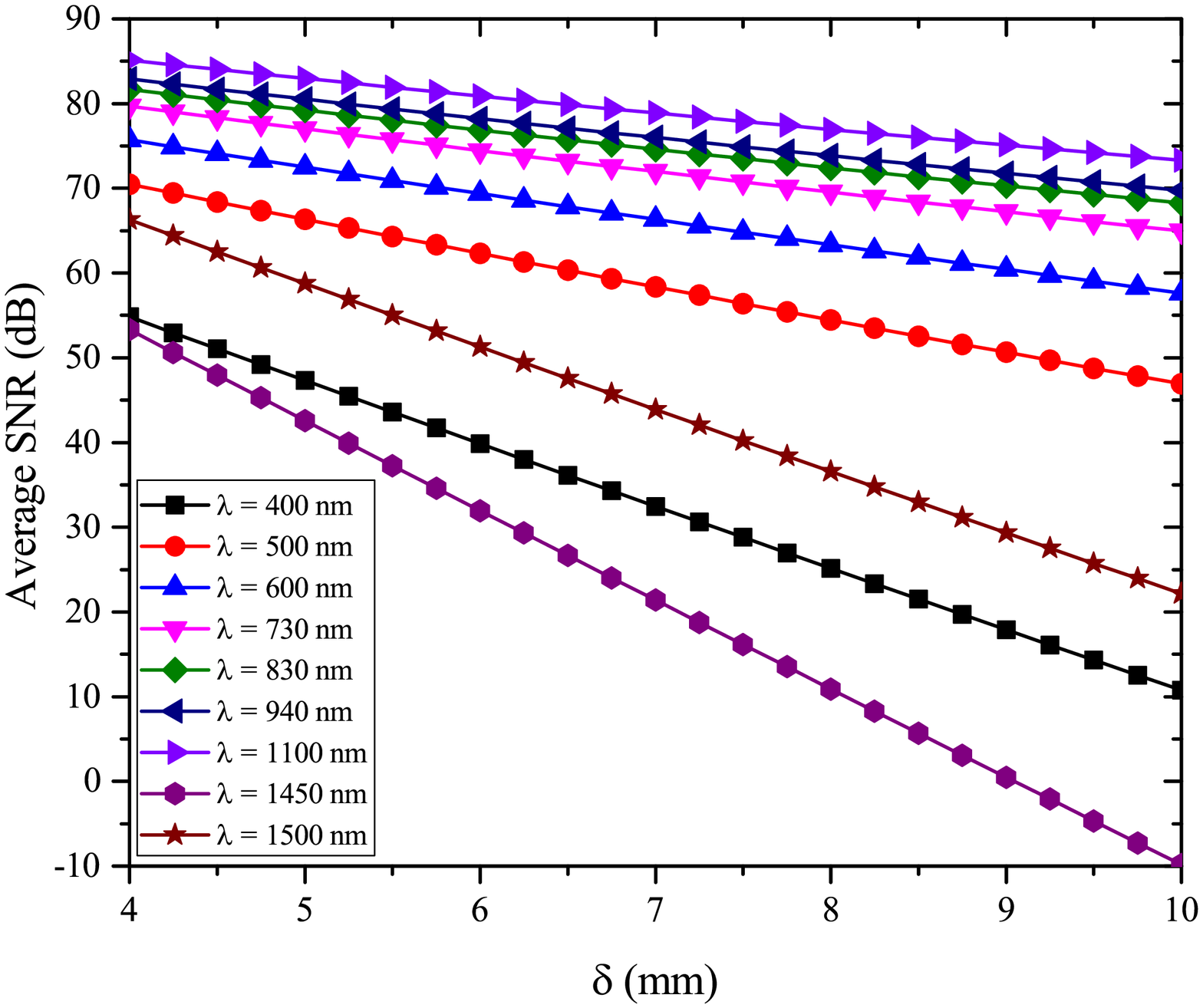}
\caption{Average SNR vs skin thickness for different values of wavelength.}\label{fig:AvSNR_vs_Delta}
\end{subfigure}
\hspace{0.02\textwidth}
\begin{subfigure}[t]{0.48\textwidth}
\centering\includegraphics[width=1\linewidth,trim=0 0 0 0,clip=false]{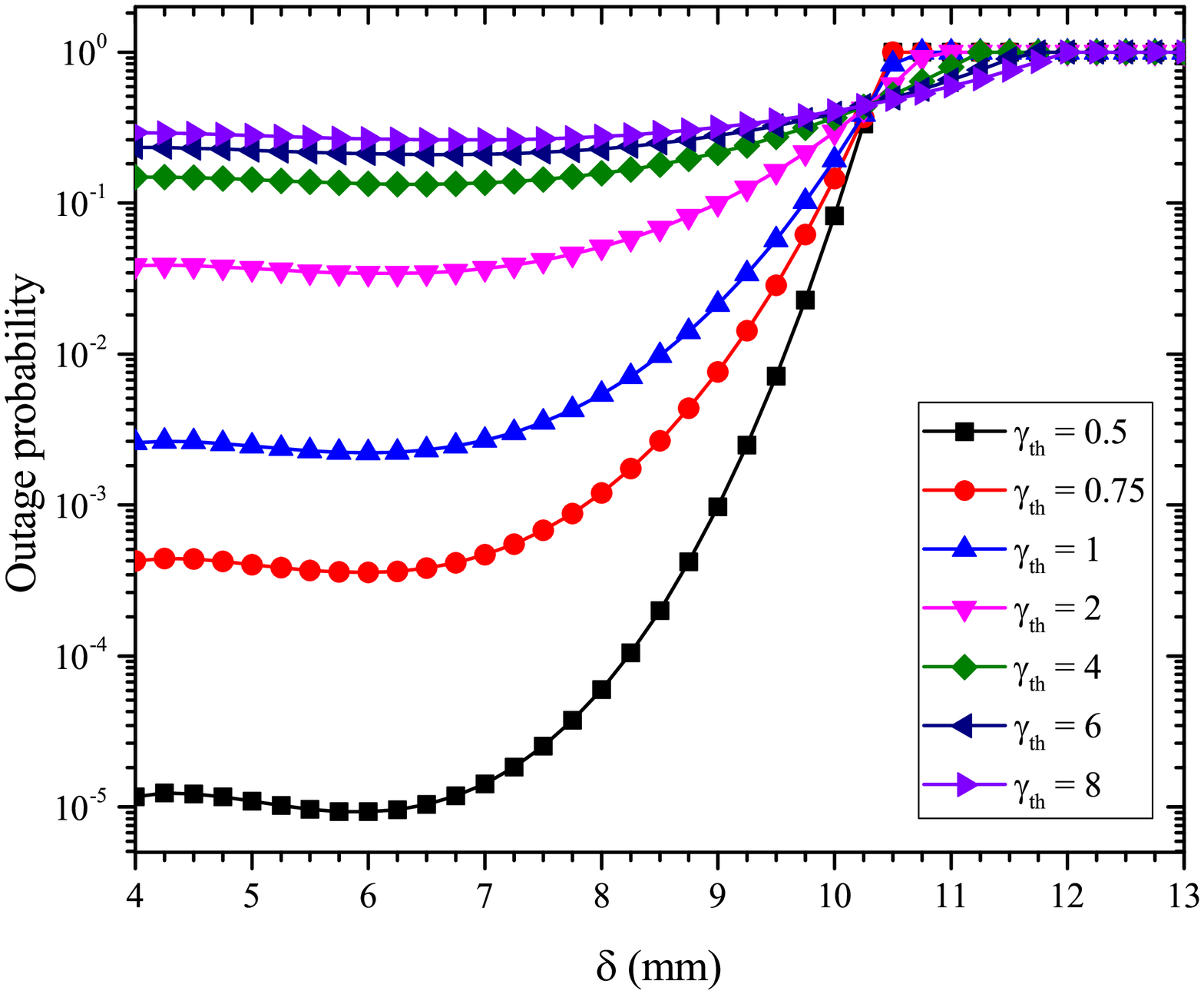}
\caption{Outage probability vs skin thickness for different values of SNR threshold, for $\lambda=1500\text{ }\mathrm{nm}$.}\label{fig:OP_vs_Delta_rth}
\end{subfigure}
\begin{subfigure}[t]{0.49\textwidth}
\centering\includegraphics[width=1\linewidth,trim=0 0 0 0,clip=false]{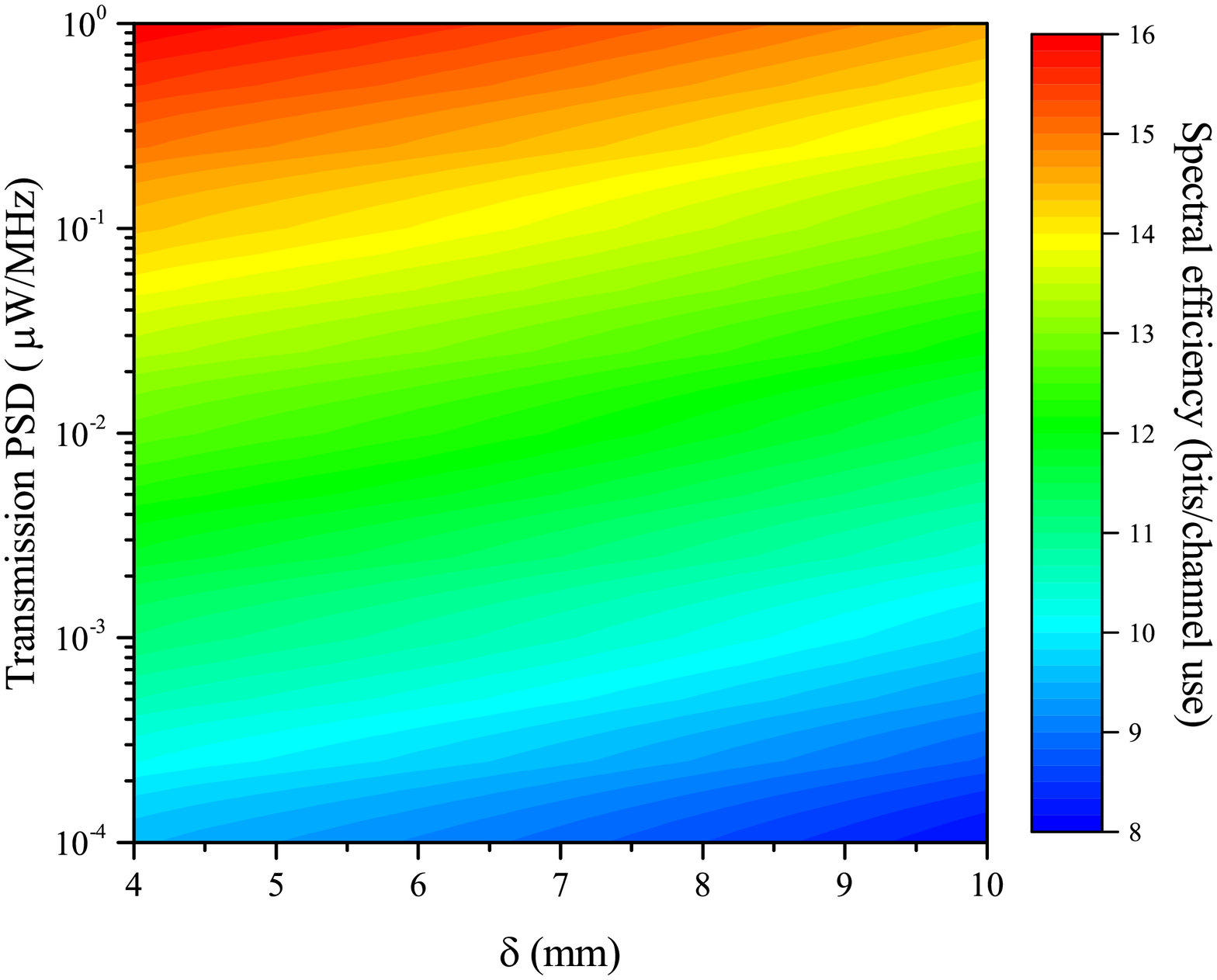}
\caption{Spectral efficiency vs skin thickness and transmission PSD.}\label{fig:Sp_vs_Delta_rth}
\end{subfigure}
\hspace{0.02\textwidth}
\begin{subfigure}[t]{0.48\textwidth}
\centering\includegraphics[width=1\linewidth,trim=0 0 0 0,clip=false]{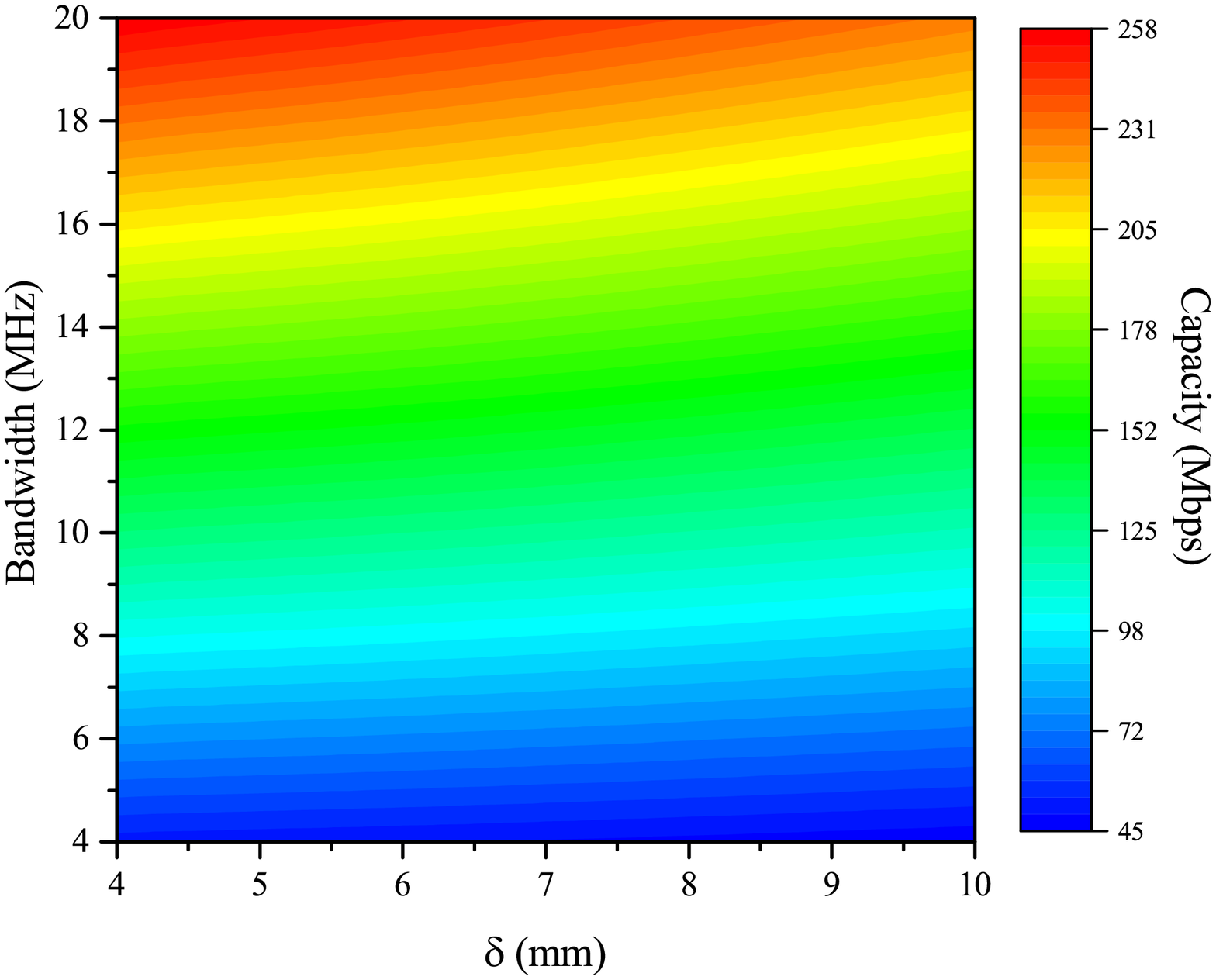}
\caption{Capacity vs skin thickness and bandwidth.}\label{fig:3D_C_Delta_BW}
\end{subfigure}
\caption{Impact of skin thickness on the OWCI's effectiveness.}
\end{figure}

\subsection{Impact of skin thickness}\label{Ss:SkinThickness}
Figure~\ref{fig:AvSNR_vs_Delta} demonstrates the impact of skin thickness on the received signal quality in OWCIs operating at different wavelengths. 
It is observed  that the analytical and simulation results coincide, which verifies   the accuracy of the derived  analytical framework. 
As expected, as the skin thickness increases  for a given wavelength, the average SNR decreases i.e. the received signal quality degrades.
For instance, the average SNR degrades by $66.6\%$ at $\lambda=1500\text{ }\mathrm{nm}$ as $\delta$ changes from $4$ to $10\text{ }\mathrm{mm}$, whereas for $\lambda=1100\text{ }\mathrm{nm}$ and the same $\delta$ variation, the average SNR degrades by $13.9\%$.
This degradation is caused due to the skin pathloss. 
Additionally, we observe that for a given skin thickness, $\delta$, the average SNR depends on the wavelength. 
For example, for $\delta=6$ mm and $\lambda=1450$ nm, the average SNR is $32$ dB, whereas for the same $\delta$ and $\lambda=1500$ nm, the average SNR is $51.3$ dB. 
Thus,  the slight  $50$ nm variation of the wavelength from $1450$ to $1500$ nm  resulted in a $60\%$ signal quality improvement. 
On the contrary,  when $\lambda$ alters from $400$ nm to $500$ nm,  for the same $\delta$, the average SNR increases by  $56.5\%$.
This observation indicates the wavelength dependency of the TOL as well as the importance of appropriately choosing the wavelength during the OWCI system design phase.

Figure~\ref{fig:OP_vs_Delta_rth} illustrates the outage probability as a function of skin thickness for different values of SNR threshold, $\gamma_{\rm th}$, for $\lambda=1500\text{ }\mathrm{nm}$.
We observe that for a fixed skin thickness, the outage probability also increases as $\gamma_{\rm th}$ increases.
For example, for $\delta=6\text{ }\mathrm{mm}$ and $\gamma_{\rm th}=0.5\text{ }\mathrm{bits/s/Hz}$, the outage probability is about $10^{-5}$, whilst for  $\gamma_{\rm th}=4$ and the same $\delta$ it is about $10^{-1}$.
Furthermore, for the same $\gamma_{\rm th}$ and when $\delta$ changes from $4\text{ }\mathrm{mm}$ to $6\text{ }\mathrm{mm}$, it is noticed that the outage probability decreases.
On the contrary,  when $\delta$ exceeds $6\text{ }\mathrm{mm}$ and for  the same $\gamma_{\rm th}$, the outage probability increases.
This is because the misalignment fading drastically affects the transmission at relatively long TX-RX distances. 
In other words, as the skin thickness increases, the intensity of the misalignment fading decreases, whereas the impact of pathloss becomes more severe.

Figure~\ref{fig:Sp_vs_Delta_rth} illustrates  the spectral efficiency as a function of the skin thickness and the PSD of the  transmission signal. 
As expected, increasing the transmission signal PSD can countermeasure the impact of skin thickness. 
For example, for $\delta=8\text{ }\mathrm{mm}$, an increase of the transmission PSD from $0.001\text{ }\mathrm{\mu W/MHz}$ to $0.01\text{ }\mathrm{\mu W/MHz}$  results in a $56.5\%$ increase of the corresponding spectral efficiency.
Moreover, we observe that in the worst case scenario, in which the skin thickness is  $10\text{ }\mathrm{mm}$, even with particularly  low transmission signal PSD of about  $0.0001\text{ }\mathrm{\mu W/MHz}$) the achievable spectral efficiency is in the order of $8\text{ }\mathrm{bits/channel\text{ } use}$.

In Fig.~\ref{fig:3D_C_Delta_BW}, the channel capacity is depicted as a function of the skin thickness and the bandwidth of the system. 
As expected,  an increase of the bandwidth for the same skin thickness results in an increase of the TOL's capacity.
For instance,  as bandwidth increases from $14\text{ }\mathrm{MHz}$ to $18\text{ }\mathrm{MHz}$, for $\delta=6\text{ }\mathrm{mm}$, an improvement of about $28.6\%$ is noticed. 
In addition, it is observed that even in the worst case scenario in which the bandwidth and the skin thickness are considered  $4\text{ }\mathrm{MHz}$ and $10\text{ }\mathrm{mm}$, respectively, the OWCI outperforms the corresponding RF counterparts.

\begin{figure}[htbp]
\begin{subfigure}[t]{0.48\textwidth}
\centering\includegraphics[width=1\linewidth,trim=0 0 0 0,clip=false]{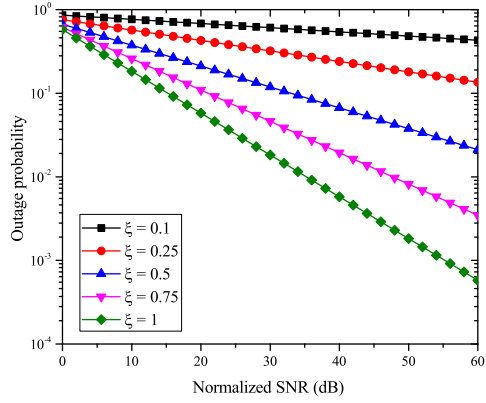}
\caption{Outage probability vs normalised SNR for different values of~$\xi$.}\label{fig:Pout_vs_SNR}
\end{subfigure}
\hspace{0.02\textwidth}
\begin{subfigure}[t]{0.48\textwidth}
\centering\includegraphics[width=1\linewidth,trim=0 0 0 0,clip=false]{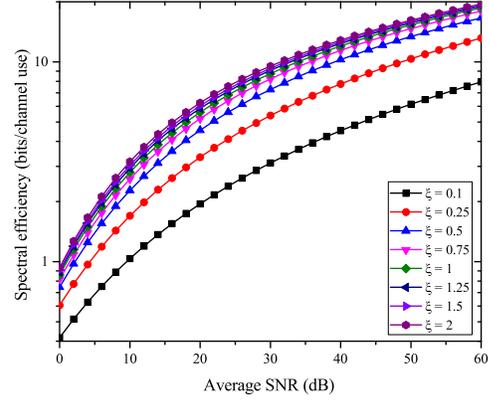}
\caption{Spectral efficiency vs average SNR for different values of~$\xi$.}\label{fig:C_vs_SNR}
\end{subfigure}
\begin{subfigure}[t]{0.51\textwidth}
\centering\includegraphics[width=1\linewidth,trim=0 0 0 0,clip=false]{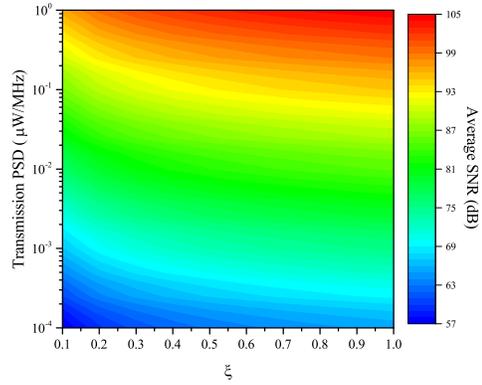}
\caption{Average SNR vs $\xi$ and transmission~PSD.}\label{fig:Av_vs_xi_Ps}
\end{subfigure}
\begin{subfigure}[t]{0.46\textwidth}
\centering\includegraphics[width=1\linewidth,trim=0 0 0 0,clip=false]{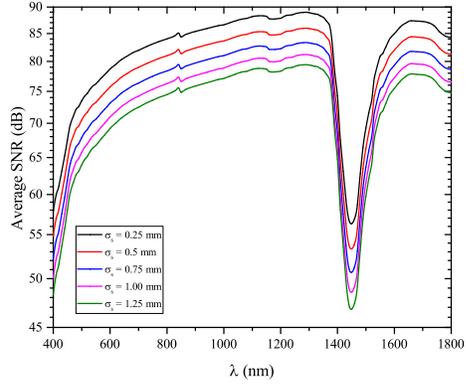}
\caption{Average SNR vs wavelegth  for different values of spatial~jitter.}\label{fig:AvSNR_vs_Lambda_Jitter}
\end{subfigure}
\caption{Impact of misalignment on OWCI's performance.}
\end{figure}

\subsection{Impact of misalignment}\label{Ss:Misalignment}
In Fig.~\ref{fig:Pout_vs_SNR}, the outage probability is illustrated as a function of the normalized SNR for different values of $\xi$. 
In this case  the normalized SNR is defined as the ratio of the average SNR to the SNR threshold, $\tilde{\gamma}/\gamma_{\rm th}$. 
Also,  markers represent the simulation outcomes whereas the  continuous curves represent the corresponding analytical results. 
It is evident that the analytical results are in agreement with the corresponding simulation results, which verifies  the validity of the presented theoretical framework. 
In addition, we observe that for a fixed normalized SNR, the outage probability increases as the misalignment effect increases, i.e., it is inversely proportional to  $\xi$.  
Moreover,  the OWCI performance in terms of  outage probability, for a fixed $\xi$, improves, as expected, as the normalized SNR increases. 
It is also worth noting that as the normalized SNR increases, the misalignment has a more  detrimental effect on the outage performance of the OWCI system. 
For example, for normalized SNR values of  $10$ dB and $50$ dB, the outage probability degrades by about $76\%$ and $99.5\%$, respectively, when $\xi$ is changed from $0.1$ to $1$.  
This indicates the importance of taking  the impact of misalignment into  consideration when determining the required transmit power and spectral~efficiency.    

Figure~\ref{fig:C_vs_SNR} demonstrates the impact of misalignment fading on the spectral efficiency of OWCI.
Again, markers and continuous curves  account for the corresponding simulation and numerical results, respectively. 
The observed agreement between these results  verifies the validity of equation~\eqref{Eq:Capacity2}.
Furthermore, it is noticed that for a given average SNR value, a decrease of $\xi$  constraints the corresponding spectral efficiency. 
For instance, for an average SNR of $40$ dB, a change of $\xi$ from $2$ to $0.1$ results in a $64.8\%$ spectral efficiency degradation, whereas, for the case of $20\text{ }\mathrm{dB}$, the corresponding spectral efficiency decreases by about ~$68.6\%$. 
This indicates that the impact of misalignment fading is more severe in the high SNR regime. 
In addition, we observe that the spectral efficiency increases proportionally to the average SNR  for fixed values of $\xi$.   
For example, for $\xi=0.5$ the spectral efficiency increases by $100\%$ and $60\%$, when the average SNR changes from $10$ dB to $20$ dB and from $20$ dB to $30$ dB, respectively. 
Therefore, it is  evident that as the average SNR increases, the spectral efficiency increase is~constrained.    

Figure~\ref{fig:Av_vs_xi_Ps} depicts the average SNR as a function of $\xi$ and $\tilde{P}_s$. 
As expected, for  fixed values of $\xi$ the PSD  of the  transmission signal is proportional to the average SNR. 
Moreover, for a given $\tilde{P}_s$, as the level of alignment increases, i.e. as $\xi$ increases, the average SNR also increases.
For example, for $\xi=0.5$, the average SNR increases by about $27.4\%$ as the transmission signal PSD increases from $0.001\text{ }\mathrm{\mu W/MHz}$ to $0.1\text{ }\mathrm{\mu W/MHz}$.
Additionally, we observe that for $\tilde{P}_s=0.001\text{ }\mathrm{\mu W/MHz}$, the average SNR increases by a factor of $3.2\%$ as $\xi$ increases from $0.4$ to $0.8$.
This indicates that the misalignment fading affects the average SNR more subtly than the transmission signal PSD.
Finally,  it becomes evident that the impact of misalignment fading can be countermeasured by increasing the PSD of the transmission signal.

Figure~\ref{fig:AvSNR_vs_Lambda_Jitter}, demonstrates the average SNR as a function of the operating wavelength for different values of the jitter SD, $\sigma_s$. 
As expected, for a given wavelength, as the intensity of the misalignment fading increases i.e. as the jitter SD increases, the signal quality degrades and thus the average SNR decreases.
It  is also noticed that the optimal operation wavelength, which maximizes the average SNR, is $1100\text{ }\mathrm{nm}$ whereas the worst performance is achieved at lower wavelengths than $600\text{ }\mathrm{nm}$ and around $1450\text{ }\mathrm{nm}$. 
In addition, a  suboptimal operation wavelength is around $1300\text{ }\mathrm{nm}$, while it is shown that there is an ultra wideband transmission window from $700$ nm to $1300\text{ }\mathrm{nm}$. 
It is worth noting that several commercial optical units (i.e. LEDs and PDs) exist in this particular  window.  

\begin{figure}[htbp]
\begin{subfigure}[t]{0.5\textwidth}
\centering\includegraphics[width=1\linewidth,trim=0 0 0 0,clip=false]{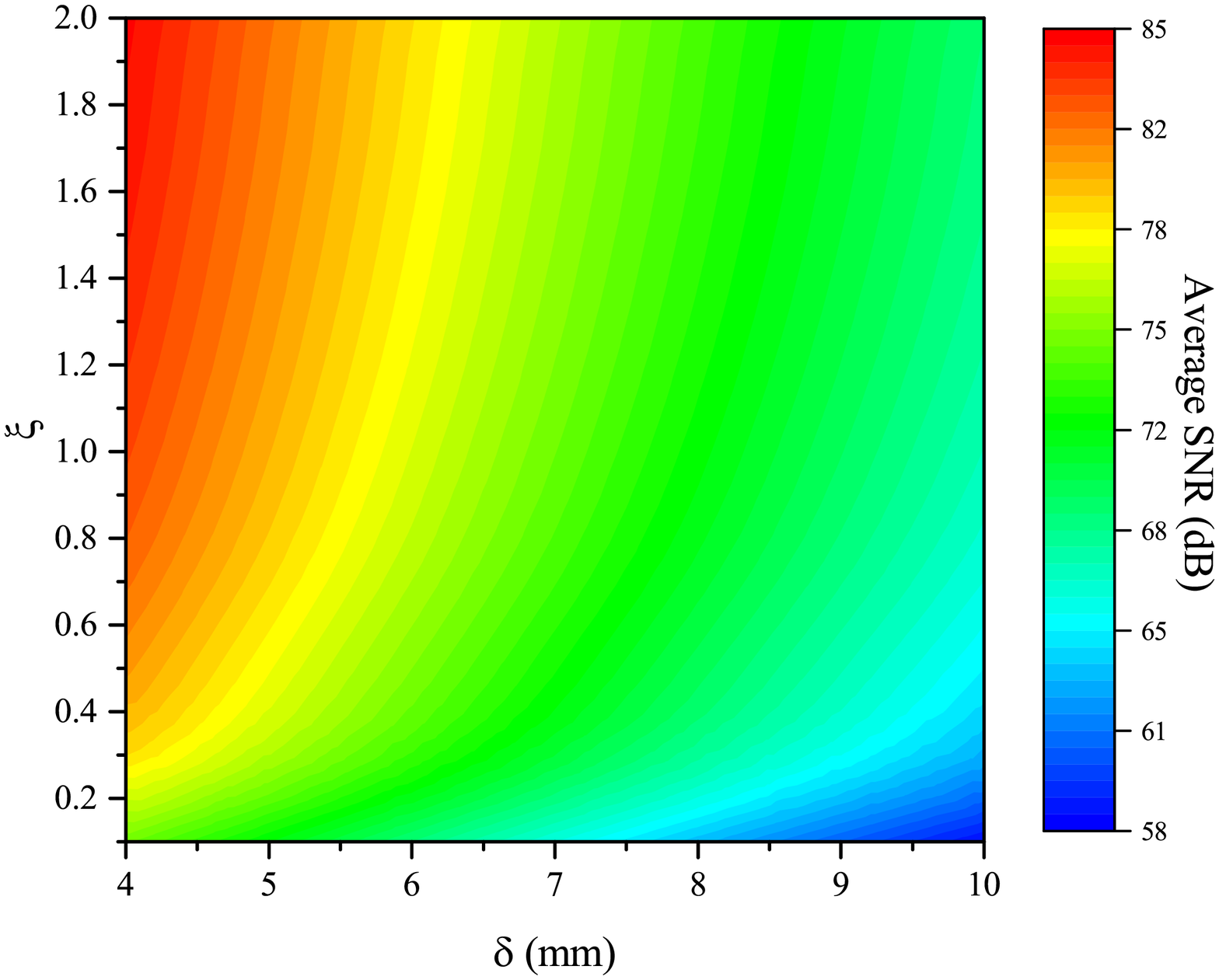}
\caption{Average SNR vs skin thickness and $\xi$, for $\lambda=940\text{ }\mathrm{nm}$.}\label{fig:Delta_Xi_AvSNR}
\end{subfigure}
\hspace{0.02\textwidth}
\begin{subfigure}[t]{0.45\textwidth}
\centering\includegraphics[width=1\linewidth,trim=0 0 0 0,clip=false]{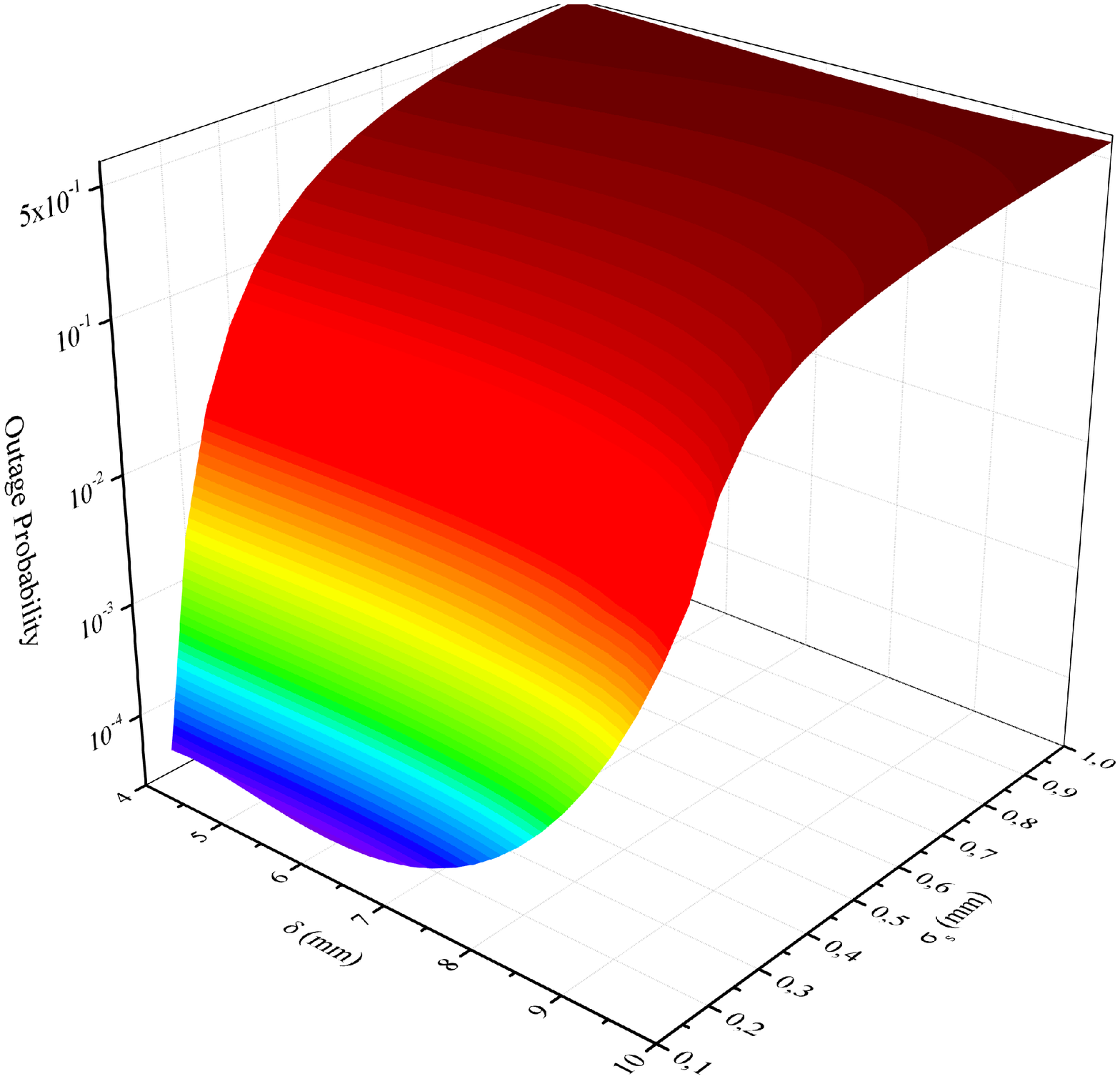}
\caption{Outage probability vs skin thickness and jitter SD.}\label{fig:Delta_ss_AvSNR}
\end{subfigure}
\begin{subfigure}[t]{1\textwidth}
\centering\includegraphics[width=1\linewidth,trim=0 0 0 0,clip=false]{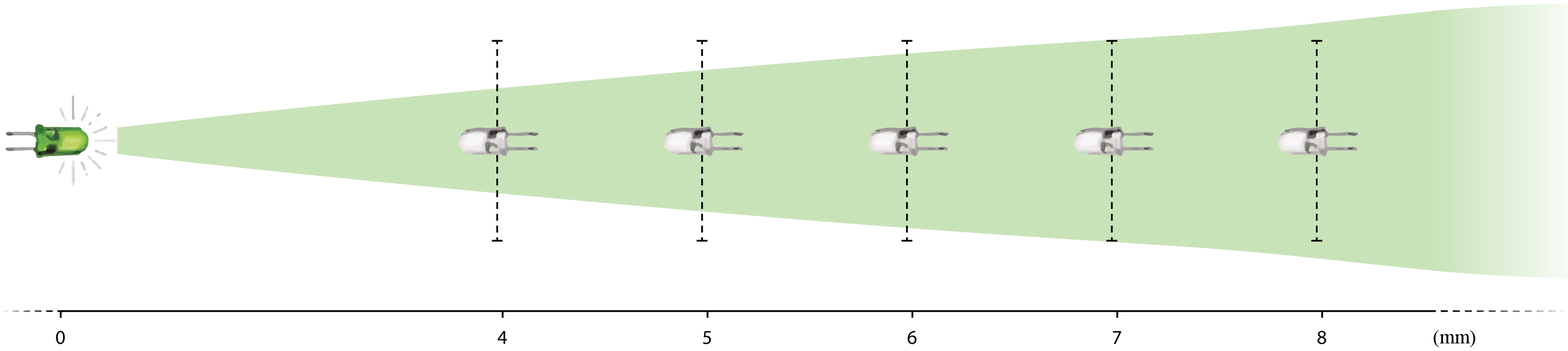}
\caption{Illustration of the misalignment fading as a function of the skin thickness.}\label{fig:jitterIllustration}
\end{subfigure}
\caption{The joint effect of skin thickness and misalignment.\\}
\end{figure}

\subsection{Joint effect of skin thickness and misalignment}\label{Ss:SkinThicknessAndMisalignment}
Figure~\ref{fig:Delta_Xi_AvSNR} illustrates the joint effect of skin thickness and misalignment fading in the received signal quality in terms of the    average SNR and for a    wavelength of  $940$ nm. 
It is evident that for  fixed values of $\xi$, the average SNR decreases  as  the skin thickness increases.
For instance, for $\xi=0.5$, the SNR decreases by $13.8\%$ as the skin thickness increases from $5$ mm to $9$ mm, whereas for $\xi=2$, the SNR decreases by $13.1\%$, for the same skin thickness variation. 
This indicates that the average SNR degradation due to skin thickness increase depends on the intensity of the misalignment fading.     
In addition, we observe that for a fixed skin thickness and as $\xi$ increases i.e. the intensity of misalignment fading decreases, the average SNR also increases. 
For example, for $\delta=7$ mm and when $\xi$ changes  from $0.1$ to $1.5$, the average SNR increases by about $10\text{ }\mathrm{dB}$, from $65.9\text{ }\mathrm{dB}$ to $75.44\text{ }\mathrm{dB}$.
Finally, it is noticed that for practical values of $\delta$ and $\xi$, i.e. $\delta\in[4, 10]\text{ }\mathrm{mm}$ and $\xi\in[0.1, 2]$, the achievable average SNR is particularly high, despite  the low transmission signal PSD.

In Fig.~\ref{fig:Delta_ss_AvSNR}, the outage probability is demonstrated as a function of the skin thickness and the jitter SD. 
It is evident that the impact of misalignment fading  in the low $\delta$ regime is more detrimental than  in the high  regime. 
This is expected since, as illustrated in Fig.~\ref{fig:jitterIllustration} and for a fixed spatial jitter, the skin thickness is proportional to the probability that the RX is within the transmission effective area.  
However, as $\delta$ increases the pathloss also increases, which shows that the impact of misalignment fading is more severe than the effect of pathloss for realistic values of skin thickness. 

\begin{figure}[htbp]
\begin{center}
\begin{subfigure}[t]{0.51\textwidth}
\centering\includegraphics[width=1\linewidth,trim=0 0 0 0,clip=false]{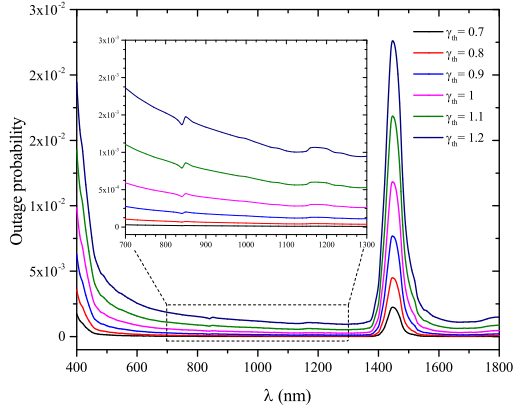}
\caption{Outage probability vs wavelegth for different values of SNR threshold.}\label{fig:OP_lambda}
\end{subfigure}
\end{center}
\begin{subfigure}[t]{0.48\textwidth}
\centering\includegraphics[width=1\linewidth,trim=0 0 0 0,clip=false]{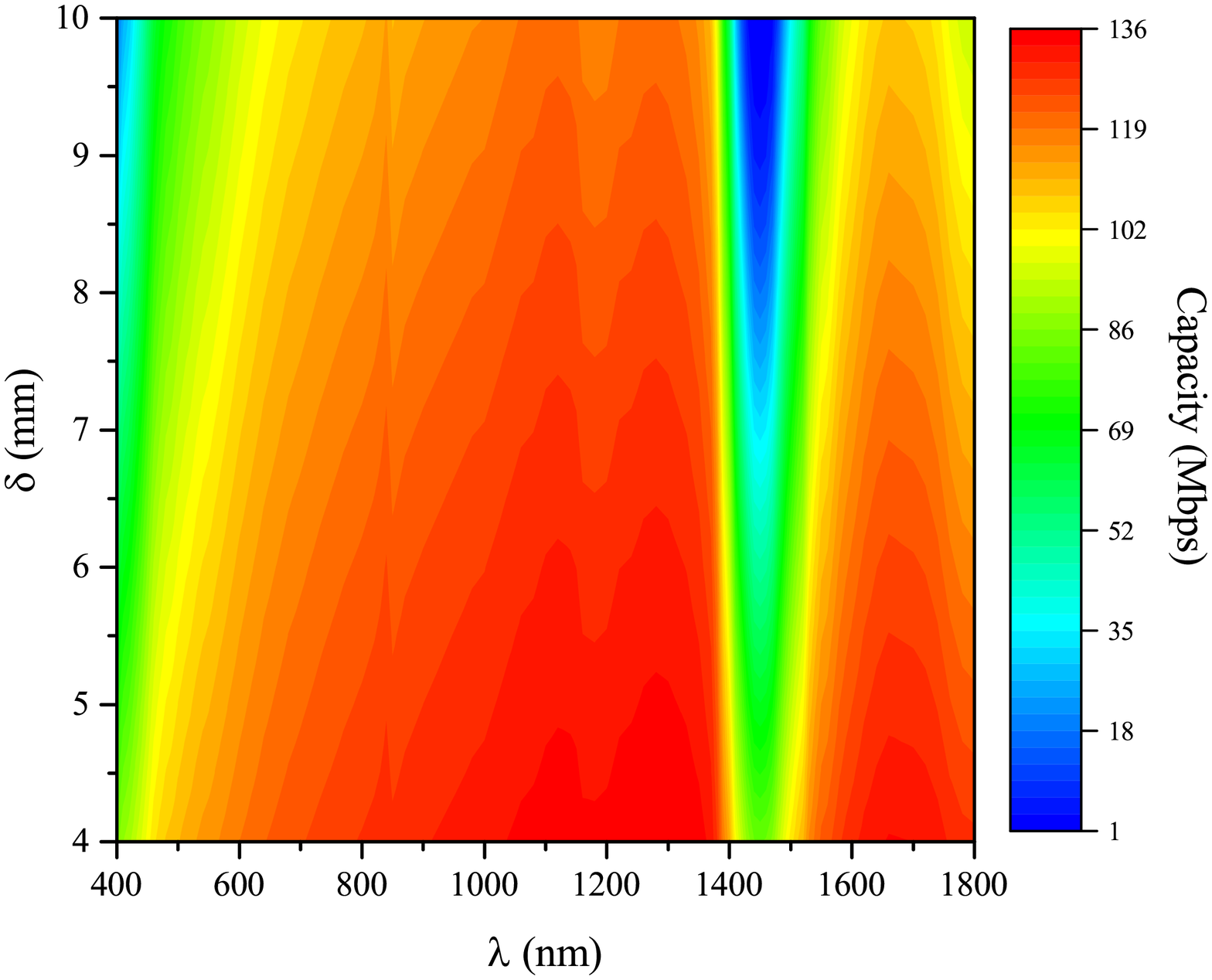}
\caption{Ergodic capacity vs wavelength and skin thickness, assuming heterodyne RX.}\label{fig:C_vs_l_d}
\end{subfigure}
\hspace{0.02\textwidth}
\begin{subfigure}[t]{0.49\textwidth}
\centering\includegraphics[width=1\linewidth,trim=0 0 0 0,clip=false]{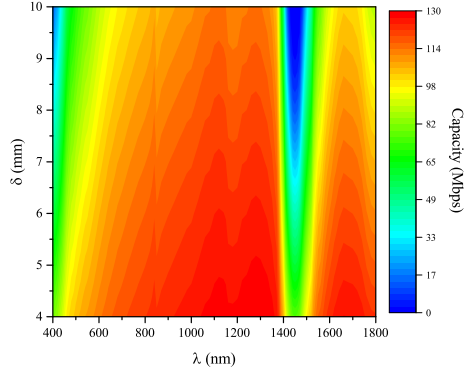}
\caption{Ergodic capacity lower bound vs wavelength and skin thickness, assuming IM/DD RX.}\label{fig:C_vs_l_d_IMDD}
\end{subfigure}
\caption{The impact of channel wavelength selectivity on the performance of OWCIs.}
\end{figure} 

\subsection{Design parameters}\label{Ss:DesignParameters}
In this section, we present the impact of different design selections in the effectiveness of OWCIs in terms of the average SNR, ergodic spectral efficiency and channel capacity. The investigated design parameters   are related to the transmission LED and PD physical characteristics. 

\subsubsection{Selecting the appropriate wavelength}\label{Sss:WavelengthSelectivity}
Figure~\ref{fig:OP_lambda} depicts the outage probability as a function of the wavelength for different values of  the SNR threshold, assuming $\xi=1$ and a transmission signal PSD of~$0.1 \text{ }\mathrm{\mu W/MHz}$. 
Moreover,  the outage probability is, as expected, proportional to $\gamma_{\rm th}$ for fixed wavelength values.  
For example, for the case of $\lambda=600\text{ }\mathrm{nm}$ the outage probability increases approximately by $150\%$ as $\gamma_{\rm th}$ is changed  from $0.8$ to $0.9$. 
Likewise, it is evident that a transmission window exists for wavelengths between $700$ nm and $1300\text{ }\mathrm{nm}$.
It is worth noting that  several commercial LEDs and PDs exist in this wavelength range (see for example\cite{D:TSHG5510, D:IR333-A, D:LED1050E, D:LED1070E, D:LED1200E}).
Finally, we observe that the wavelength windows from $400$ nm to $600\text{ }\mathrm{nm}$ and around $1450\text{ }\mathrm{nm}$ are not optimal for OWCIs, since  the optimal transmission wavelength is $1100\text{ }\mathrm{nm}$.

In Fig.~\ref{fig:C_vs_l_d}, the ergodic capacity of the TOL is demonstrated  as a function of the wavelength and the skin thickness, under the assumption of heterodyne RX. 
Although the transmit power and the bandwidth are relatively low, OWCI can achieve a capacity in the range between $1\text{ }\mathrm{Mbps}$ and $136\text{ }\mathrm{Mbps}$. 
In addition,  we observe that  for a fixed wavelength the skin thickness increases as the capacity decreases. 
For instance, for the case of $\lambda=500\text{ }\mathrm{nm}$ the capacity degrades by $10.3\%$ and $12.3\%$ as the skin thickness increases from $5$ mm to $7\text{ }\mathrm{mm}$ and from $7$ mm to $9\text{ }\mathrm{mm}$, respectively. 
Likewise,   the same skin thickness alterations for  $\lambda=1100\text{ }\mathrm{nm}$  result to a respective capacity degradation of  $3.3\%$ and $4\%$. 
Furthermore,  for the same skin thickness alterations and $\lambda=1450\text{ }\mathrm{nm}$, the capacity degrades by $49.5\%$ and $85\%$, respectively.
This indicates that the level of the capacity degradation due to the skin thickness  depends largely on the corresponding  wavelength value. 
Moreover, it is evident that the optimal performance in terms of spectral efficiency is  achieved at a wavelength  around $1100\text{ }\mathrm{nm}$, independently  of the skin thickness. 
In this particular wavelength, the minimum and maximum achievable ergodic capacities are  $120.87\text{ }\mathrm{Mbps}$ and $135\text{ }\mathrm{Mbps}$, respectively. 
On the contrary, for $\delta\leq 5.25\text{ }\mathrm{nm}$ the worst possible capacity is achieved at $\lambda=400\text{ }\mathrm{nm}$, whereas for $\delta>5.25\text{ }\mathrm{nm}$  it is achieved at $\lambda=1450\text{ }\mathrm{nm}$.
Furthermore, we observe that the optimal transmission window ranges between $\lambda=900$ nm and  $\lambda=1300\text{ }\mathrm{nm}$.

Similarly,  the ergodic capacity lower bound is illustrated   in Fig.~\ref{fig:C_vs_l_d_IMDD} as a function of the wavelength and the skin thickness, assuming IM/DD RX. 
By comparing the results in Fig.~\ref{fig:C_vs_l_d} and~\ref{fig:C_vs_l_d_IMDD}, we observe that in both cases the optimal, the sub-optimal and the worst transmission wavelengths coincide. 
Also,  it is evident that regardless  of the type of the RX, the optimal transmission window is in the range between $\lambda=900$ nm and  $\lambda=1300\text{ }\mathrm{nm}$. 
In other words, the comparison of these figures reveals that the selection of the transmission wavelength should be independent of the RX type. 

\begin{figure}[htbp]
\begin{subfigure}[t]{0.47\textwidth}
\centering\includegraphics[width=1\linewidth,trim=0 0 0 0,clip=false]{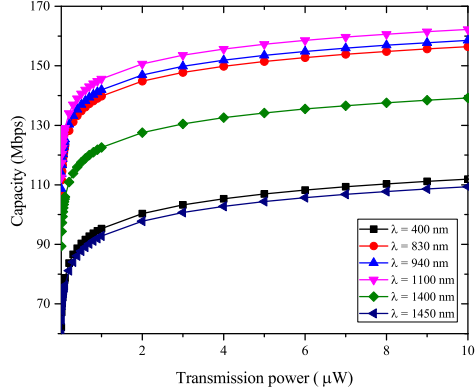}
\caption{Ergodic capacity vs transmit power for different wavelengths.}\label{fig:C_vs_Power}
\end{subfigure}
\hspace{0.01\textwidth}
\begin{subfigure}[t]{0.51\textwidth}
\centering\includegraphics[width=1\linewidth,trim=0 0 0 0,clip=false]{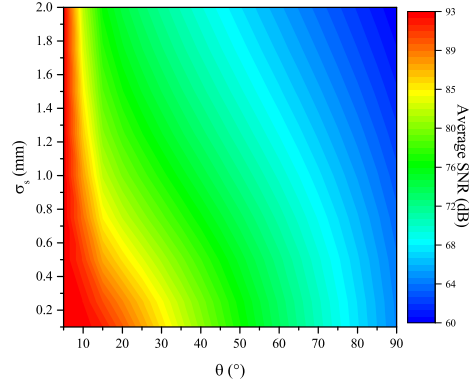}
\caption{Average SNR vs divergence angle and jitter SD.}\label{fig:3D_AvSNR_Theta_Jitter}
\end{subfigure}
\begin{subfigure}[t]{0.48\textwidth}
\centering\includegraphics[width=1\linewidth,trim=0 0 0 0,clip=false]{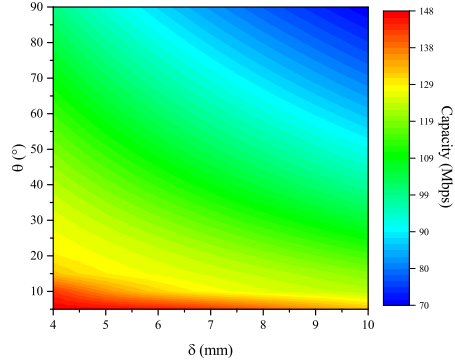}
\caption{Ergodic capacity vs skin thickness and divergence angle.}\label{fig:C_vs_delta_theta}
\end{subfigure}
\hspace{0.02\textwidth}
\begin{subfigure}[t]{0.48\textwidth}
\centering\includegraphics[width=1\linewidth,trim=0 0 0 0,clip=false]{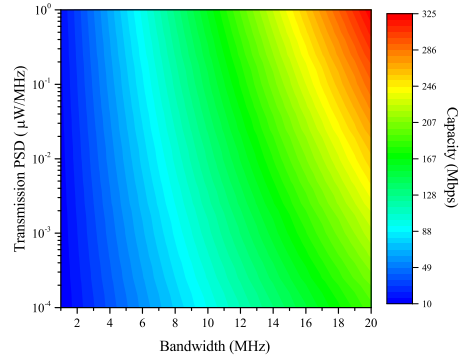}
\caption{Ergodic capacity lower bound vs  bandwidth  and transmit power.}\label{fig:C_vs_BW_Ps}
\end{subfigure}
\caption{The impact of TX design parameters on the performance of OWCI.}
\end{figure}

\subsubsection{TX parameters}\label{Sss:TxParameters}
In Fig.~\ref{fig:C_vs_Power}, the capacity of the TOL is depicted as a function of the transmit power at different wavelengths. 
The considered wavelengths correspond to the best case scenario, which occurs at the optimal wavelength of the OWCI ($1100\text{ }\mathrm{nm}$), the worst case scenario based on the skin attenuation coefficient ($1450\text{ }\mathrm{nm}$), the worst case scenario based on the responsivity of the photodetector ($400\text{ }\mathrm{nm}$), two cases of wavelengths commonly used in commercial LEDs ($830\text{ }\mathrm{nm}$ and $940\text{ }\mathrm{nm}$), and one scenario for an intermediate value ($1400\text{ }\mathrm{nm}$).
To this end, we observe that the maximum capacity is achieved at $1100\text{ }\mathrm{nm}$, while at $830$ nm, where the TSHG5510 operates~\cite{D:TSHG5510}, and $940\text{ }\mathrm{nm}$, where the IR333~\cite{D:IR333-A} operates, the received capacity approaches the optimum.
This indicates that commercial optics can be effectively used in the development of OWCIs. 
Interestingly, the worst case scenario at $1450\text{ }\mathrm{nm}$ provides slightly higher capacity than the one at $400\text{ }\mathrm{nm}$.
As a result, we observe that both the  PD's responsivity and the skin transmitivity drastically impact the quality of the OWCI link.
Also, even though the transmit signal power of the OWCI is considerably lower compared to the RF counterpart (in the order of $40\text{ }\mathrm{mW}$~\cite{A:Cochlear_Implants_System_design_integration_and_evaluation,4431855}), the achievable data rate of the OWCI is surprisingly high, i.e. in the order of $150\text{ }\mathrm{Mbps}$. 
This indicates that the OWCI can achieve a substantially higher power and data rate efficiency compared to the corresponding conventional RF system. 

In Fig.~\ref{fig:3D_AvSNR_Theta_Jitter}, the average SNR is illustrated as a function of the TX's divergence angle and the jitter SD.
We observe that the value of average SNR decreases at an increase of  the divergence angle and the jitter SD.
For instance, for $\theta=20^{o}$, as jitter SD increases from $0.1\text{ }\mathrm{mm}$ to $1\text{ }\mathrm{mm}$,  the average SNR increases by $10.22\%$.
In addition, it is noticed that for a fixed $\sigma_s=0.5\text{ }\mathrm{mm}$ the average SNR increases by about $9.8\%$ as the divergence angle varies from $10^{o}$ to $30^{o}$.
This indicates that by employing LEDs with low divergence angle, we can countermeasure the impact of misalignment fading.

In Fig.~\ref{fig:C_vs_delta_theta}, the ergodic capacity is illustrated as a function of the skin thickness and the divergence angle, assuming heterodyne RX. 
It is shown that for a fixed $\theta$, the capacity decreases as  $\delta$ increases. 
For example, for $\theta=15^o$ and $\delta=5\text{ }\mathrm{mm}$, a $128.77\text{ }\mathrm{Mbps}$ ergodic capacity is achieved, whereas for the same $\theta$ and $\delta=10\text{ }\mathrm{mm}$, the ergodic capacity is $119.43\text{ }\mathrm{Mbps}$. Therefore, a $7.25\%$ ergodic capacity degradation occurs  when the skin thickness doubles.   
On the contrary, for a given $\delta$ the capacity is inversely proportional to $\theta$; hence, we observe that by decreasing the divergence angle, we can countermeasure the incurred pathloss effect. For instance, for the case of  $\delta=8\text{ }\mathrm{mm}$, a $5\%$ capacity increase is achieved when the divergence angle changes from $20^o$ to $10^o$. 

Figure~\ref{fig:C_vs_BW_Ps}, depicts the capacity of the TOL as a function of the transmission bandwidth and PSD.
As expected, for a fixed transmission PSD the capacity of the system improves as the available bandwidth increases.
For example, for $\tilde{P_s}=0.001\text{ }\mathrm{\mu W/MHz}$,  the capacity improves by approximately $200\%$ as the bandwidth increases from $6\text{ }\mathrm{MHz}$ to $18\text{ }\mathrm{MHz}$.
In addition, we observe that for a certain bandwidth, an increase of the transmission PSD improves the system capacity by a lesser factor.
For instance, when the bandwidth is $10\text{ }\mathrm{MHz}$ and transmission PSD increases from $0.001\text{ }\mathrm{\mu W/MHz}$ to $0.1\text{ }\mathrm{\mu W/MHz}$, the TOL's capacity also increases by a factor of $29.6\%$.
This indicates that the bandwidth affects the quality of the TOL  more severely than the transmission PSD.

\begin{figure}[htbp]
\begin{subfigure}[t]{0.48\textwidth}
\centering\includegraphics[width=1\linewidth,trim=0 0 0 0,clip=false]{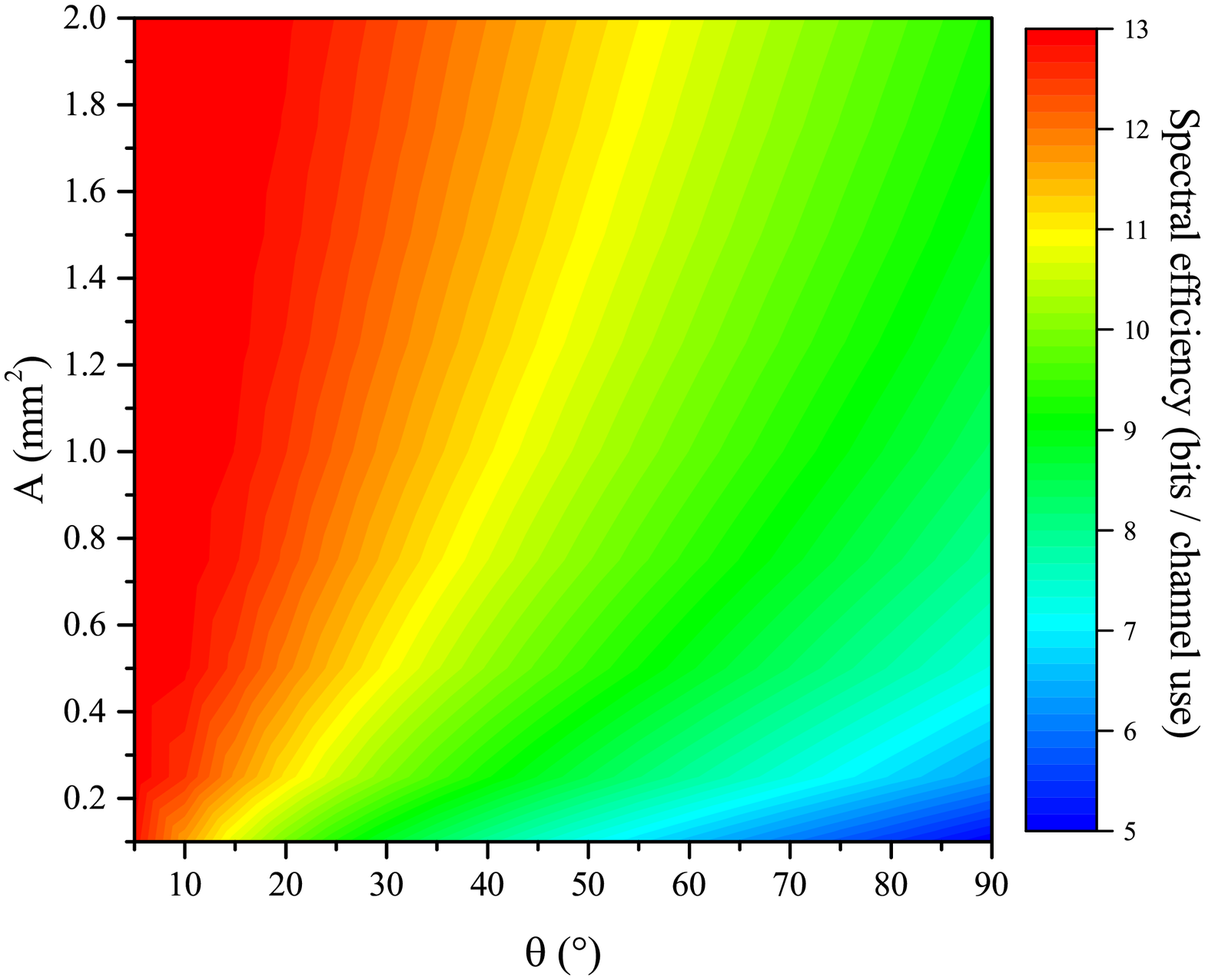}
\caption{Spectral efficiency vs PD effective area and divergence angle.}\label{fig:3D_Theta_Aeff_C}
\end{subfigure}
\hspace{0.02\textwidth}
\begin{subfigure}[t]{0.48\textwidth}
\centering\includegraphics[width=1\linewidth,trim=0 0 0 0,clip=false]{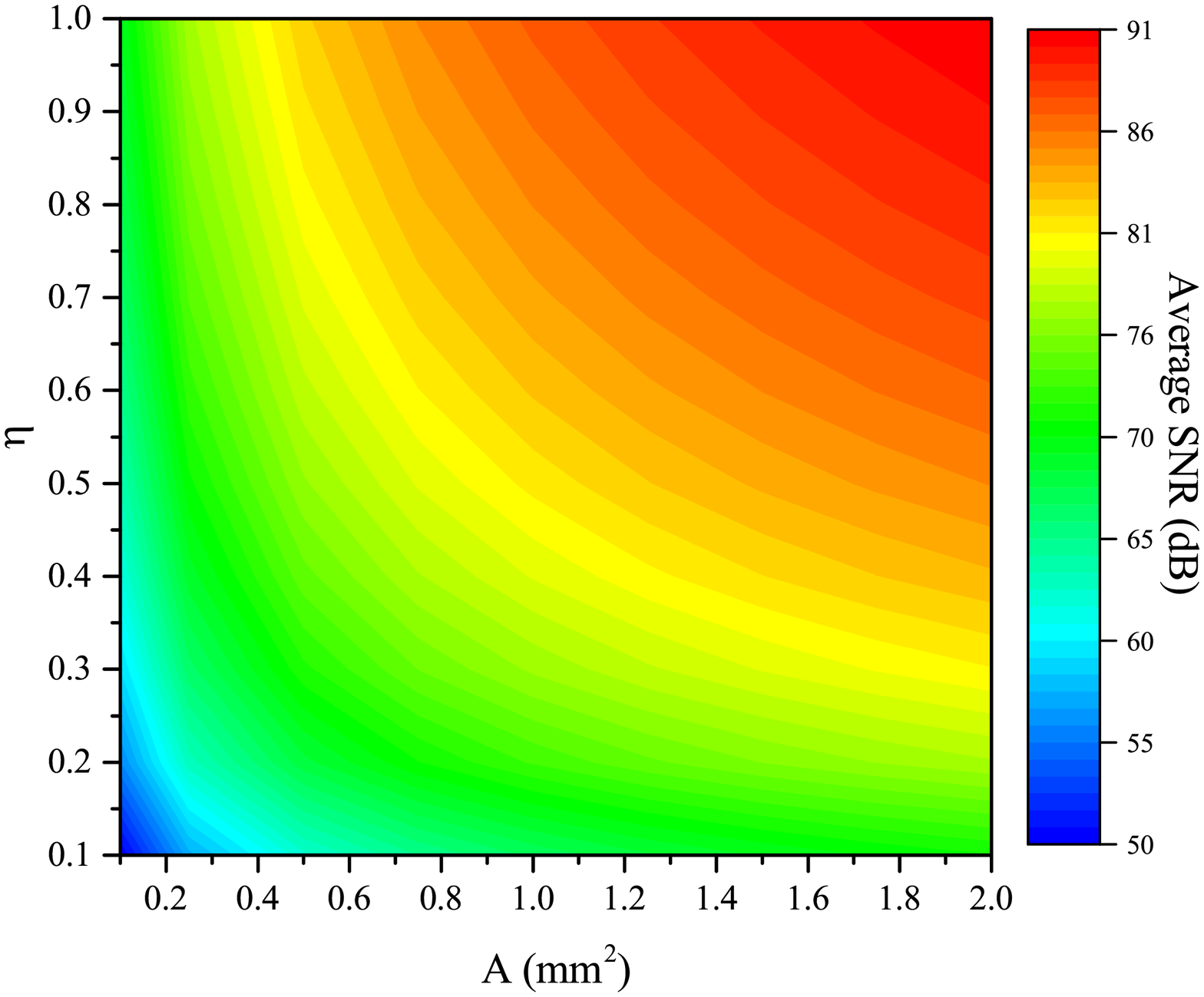}
\caption{Average SNR vs  PD effective area and photodiode's quantum efficiency.}\label{fig:AvSNR_vs_A_eta_3D}
\end{subfigure}
\begin{subfigure}[t]{0.48\textwidth}
\centering\includegraphics[width=1\linewidth,trim=0 0 0 0,clip=false]{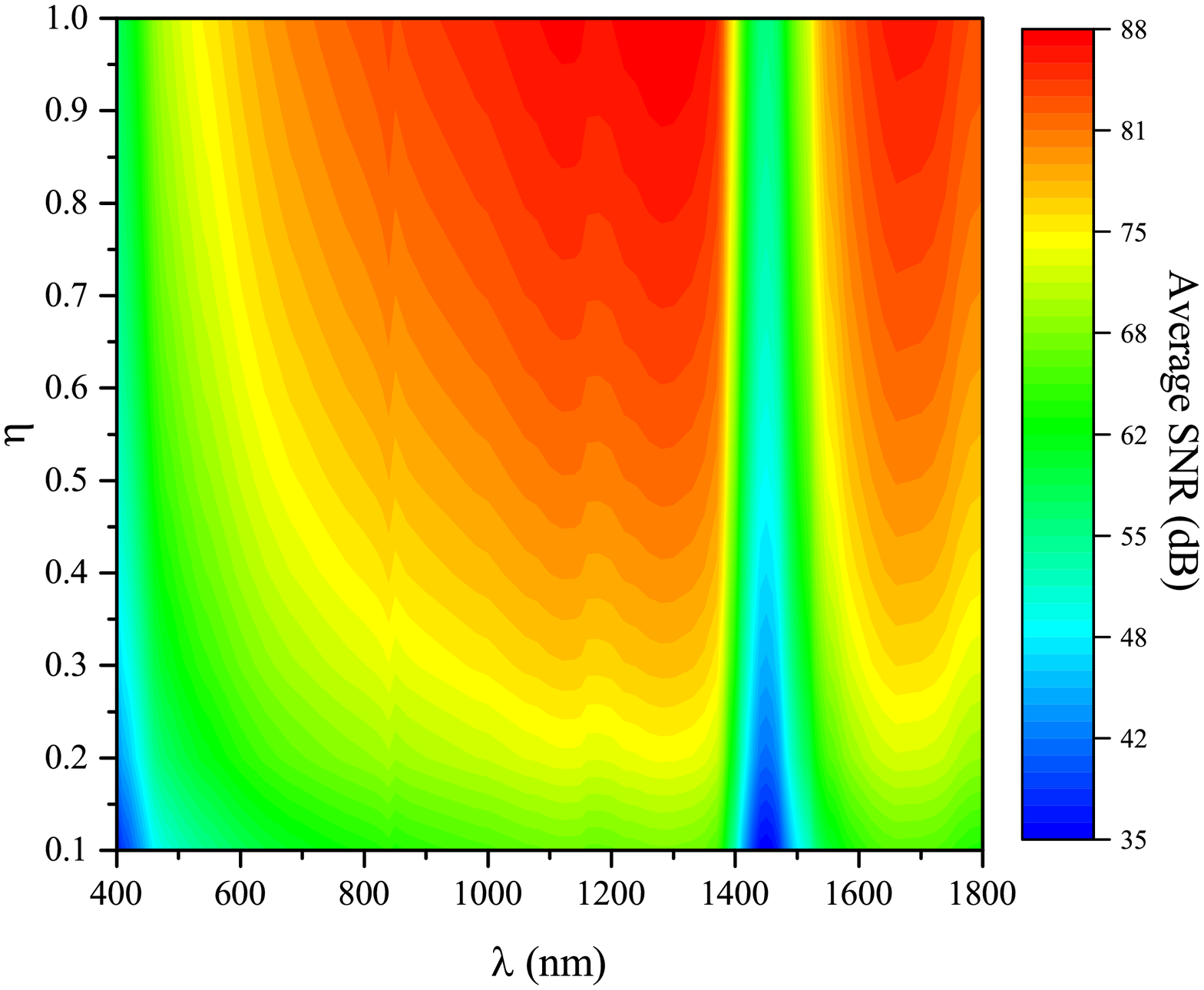}
\caption{Average SNR vs  wavelength and PD's quantum efficiency.}\label{fig:AvSNR_vs_lambda_eta_3D}
\end{subfigure}
\hspace{0.02\textwidth}
\begin{subfigure}[t]{0.48\textwidth}
\centering\includegraphics[width=1\linewidth,trim=0 0 0 0,clip=false]{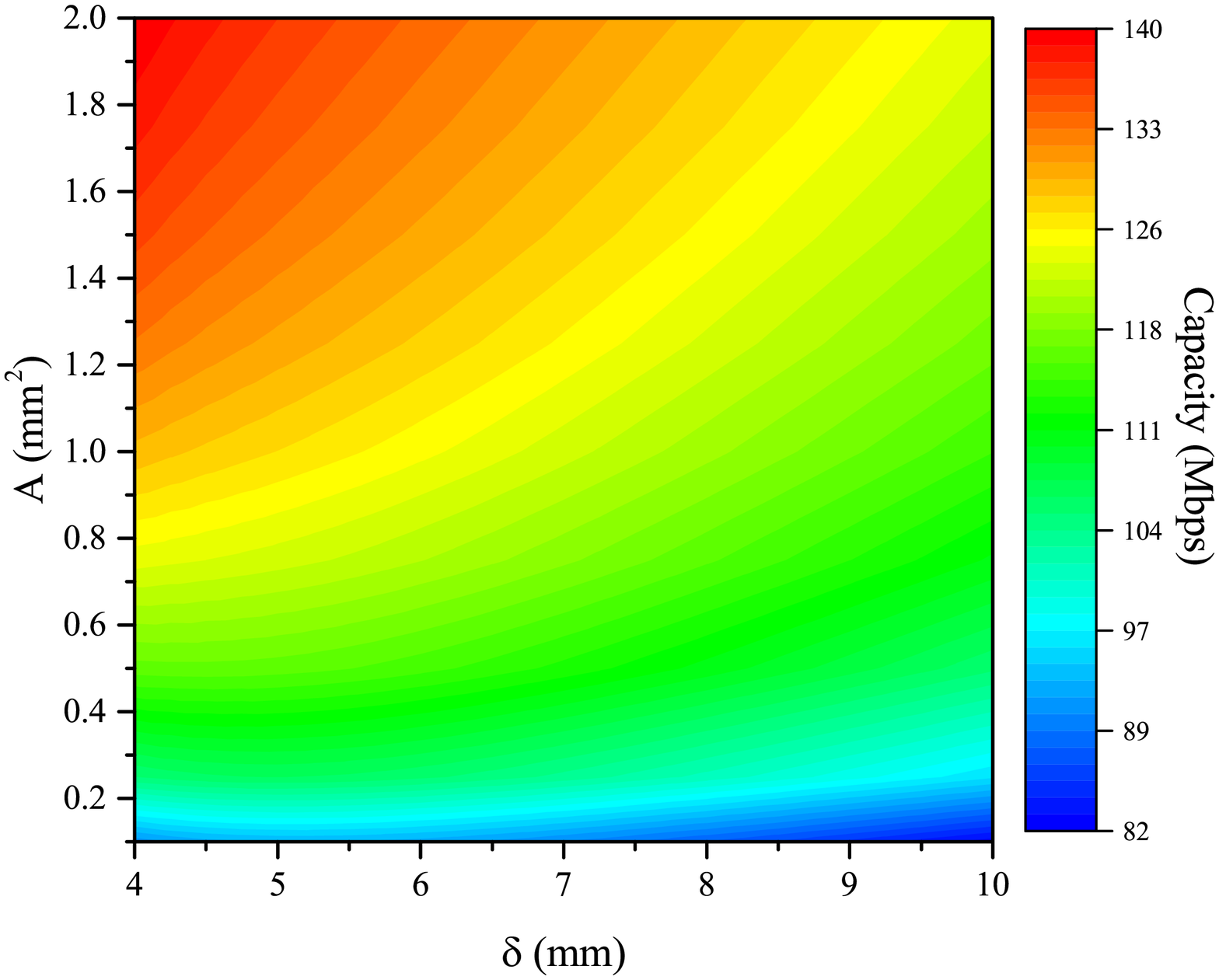}
\caption{Ergodic capacity vs  skin thickness and PD effective area.}\label{fig:C_vs_delta_Aeff}
\end{subfigure}
\begin{subfigure}[t]{0.49\textwidth}
\centering\includegraphics[width=1\linewidth,trim=0 0 0 0,clip=false]{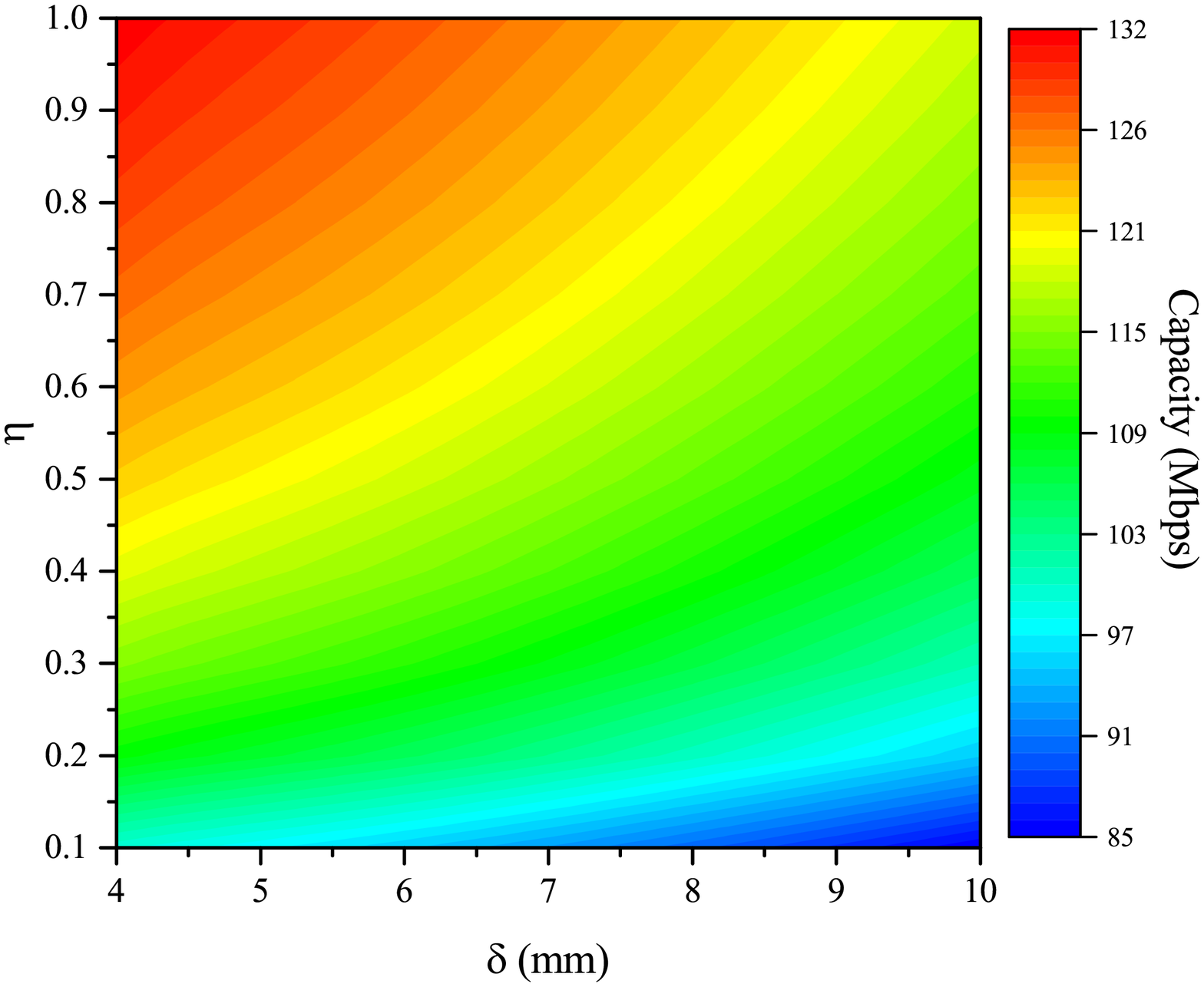}
\caption{Ergodic capacity vs skin thickness and quantum efficiency.}\label{fig:C_vs_eta_delta}
\end{subfigure}
\hspace{0.02\textwidth}
\begin{subfigure}[t]{0.44\textwidth}
\centering\includegraphics[width=1\linewidth,trim=0 0 0 0,clip=false]{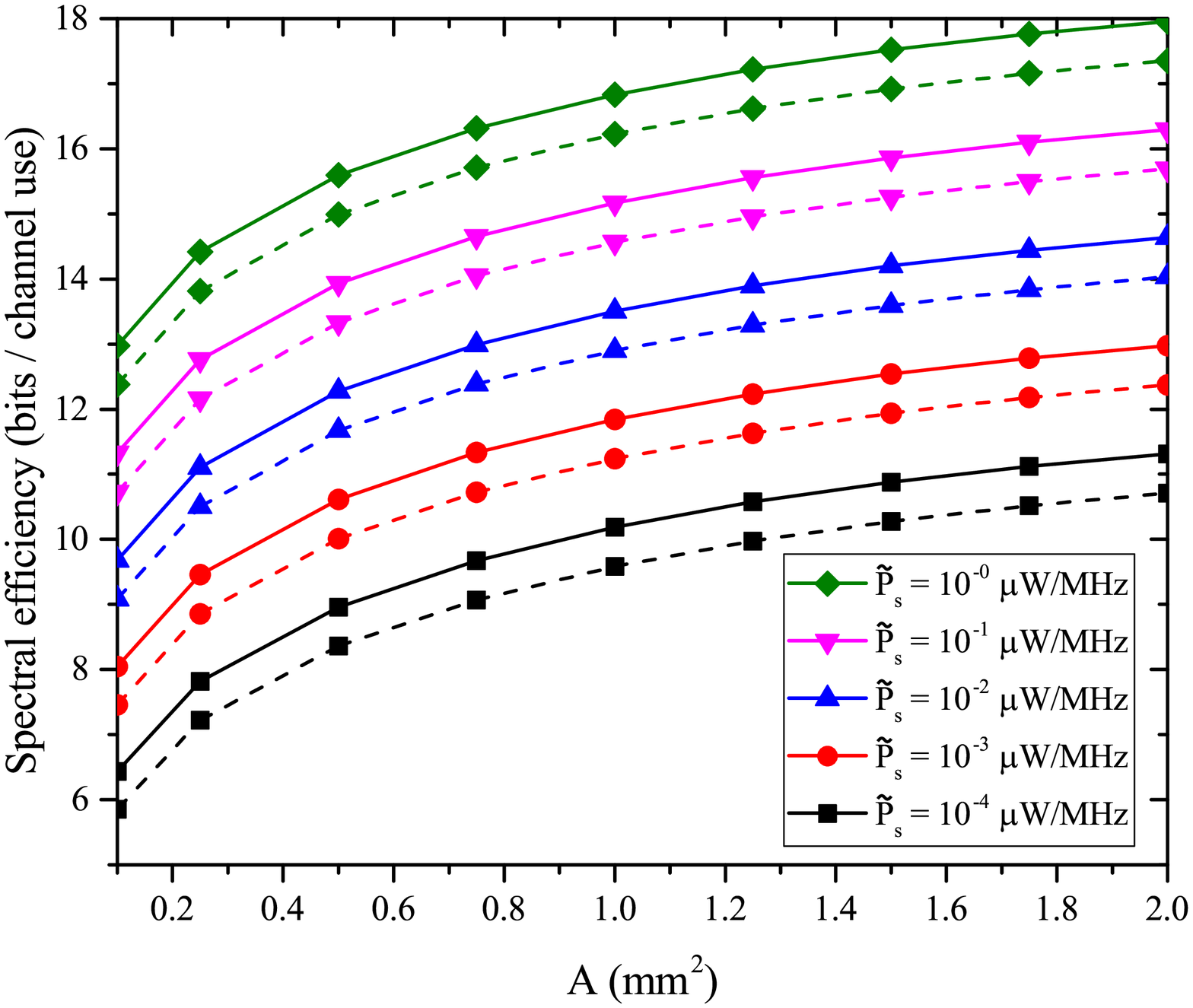}
\caption{Spectral efficiency vs photodiode effective area for different values of transmissioin PSD, heterodyne (continious lines) and IM/DD (dashed lines) RX.}\label{fig:C_vs_Ps_Aeff}
\end{subfigure}
\caption{The impact of RX design parameters on the performance of OWCI.}
\end{figure}

\subsubsection{RX parameters}\label{Sss:RxParameters}
This section discusses the impact of the physical characteristics of the RX optical unit. 
To this end,  Fig.~\ref{fig:3D_Theta_Aeff_C} exhibits the proportionality of the spectral efficiency with respect to the effective area,  $A$,  of the PD for a given divergence angle, $\theta$. 
For instance,  a $24.9\%$ spectral efficiency increase can be achieved at  $\theta=20^{\circ}$ by increasing the PD's effective area from $0.1\text{ }\mathrm{mm}^2$  to $1\text{ }\mathrm{mm}^2$.
On the contrary,  the spectral efficiency is inversely proportional to  $\theta$ for a fixed $A$. 
For example, a $11.4\%$ spectral efficiency degradation occurs  for $A=1\text{ }\mathrm{mm}^2$ when $\theta$ increases from $20^o$ to $40^o$.  

Figure~\ref{fig:AvSNR_vs_A_eta_3D} depicts the average SNR as a function of the PD's effective area and quantum efficiency, $\eta$.
We observe that the value of average SNR increases as A and $\eta$ increase.
For example, for $\eta=0.3$ and $A=0.5\text{ }\mathrm{mm^2}$ the average SNR is approximately $71.86\text{ }\mathrm{dB}$, while for the same $\eta$ and $A=1.5\text{ }\mathrm{mm^2}$, it is approximately $79.12\text{ }\mathrm{dB}$. 
In addition, in the case of $A=1\text{ }\mathrm{mm^2}$ and $\eta=0.2$, the average SNR is  approximately $73.1\text{ }\mathrm{dB}$, whilst for the same $A$ and $\eta=0.8$,  it is approximately $85.15\text{ }\mathrm{dB}$. 

Figure~\ref{fig:AvSNR_vs_lambda_eta_3D} presents the average SNR as a function of the wavelength and the PD's quantum efficiency.
We observe that the OWC link achieves an average SNR in the range between $35$ dB and $88\text{ }\mathrm{dB}$.
In addition,  the PD's quantum efficiency is, as expected, proportional to the average SNR  for a fixed wavelength. Also, as analyzed in Fig.~\ref{fig:AvSNR_vs_Lambda_Jitter}, it becomes evident that a transmission window exists for wavelengths between $700\text{ }\mathrm{nm}$ and $1300\text{ }\mathrm{nm}$, where the link achieves the highest values for the average SNR.
Furthermore,  for the optimal value at  $\lambda=1100\text{ }\mathrm{nm}$, as $\eta$ increases from $0.2$ to $0.8$ the average SNR increases by  $16.5\%$.
This indicates that the selection of the operation wavelength is more critical than the quantum efficiency of the PD.

Figure~\ref{fig:C_vs_delta_Aeff} illustrates the ergodic capacity as a function of the skin thickness  and the   effective area of the PD.
 It is noticed that regardless of the values of $\delta$ and $A$, the ergodic capacity is in the range between $82$ Mbps and $140$ Mbps, which is substantially higher than the achievable rate in the baseline RF CI solution that does not exceed $1\text{ }\mathrm{Mbps}$~\cite{A:Cochlear_Implants_System_design_integration_and_evaluation,4431855}. 
As expected,   as $\delta$ increases for a fixed $A$,  the pathloss increases and therefore the capacity decreases. 
For instance,  as $\delta$ increases from $5\text{ }\mathrm{mm}$ to $9\text{ }\mathrm{mm}$, for $A=1\text{ }\mathrm{mm}^2$,  the capacity of the system decreases by a factor of $7.4\%$.
On the contrary,  the ergodic capacity also increases as $A$ increases, for a given $\delta$.  
For example, for $\delta=6\text{ }\mathrm{mm}$, the ergodic capacity increases by $88.7\%$ when $A$ varies from $0.5\text{ }\mathrm{mm}^2$ to $1.5\text{ }\mathrm{mm}^2$.
In other words, an increase on the PD's effective area can countermeasure the pathloss effect. 

In Fig.~\ref{fig:C_vs_eta_delta}, we demonstrate the joint effect of the PD's quantum efficiency and skin thickness on the OWCI performance in terms of the ergodic capacity.  
It is evident that   increasing the quantum efficiency can prevent  the capacity loss due to the skin thickness. 
For example, a PD with $\eta=0.3$ achieves an ergodic capacity of  $114.86\text{ }\mathrm{Mbps}$  for $\delta=4\text{ }\mathrm{mm}$, whilst a similar performance can be achieved for $\eta=0.7$ when the skin thickness is $9\text{ }\mathrm{mm}$. 

In Fig.~\ref{fig:C_vs_Ps_Aeff}, we illustrate the impact of the PD effective area on the corresponding spectral efficiency for different values of transmission PSD and different type of RXs, namely heterodyne and IM/DD.
As anticipated,   the spectral efficiency  is proportional to $\tilde{P}_s$ for a fixed PD effective area.   
Moreover, for a given $\tilde{P}_s$, as the PD effective area increases, the spectral efficiency also increases.  
For instance, for $\tilde{P}_s=1\text{ }\mathrm{\mu W/MHz}$ and heterodyne RX, the spectral efficiency increases by $7.9\%$ when the PD effective area changes from $0.5\text{ }\mathrm{mm}^2$ to  $1\text{ }\mathrm{mm}^2$, whereas from the same signal PSD  a PD with an effective area of  $1.5\text{ }\mathrm{mm}^2$ can achieve $4.12\%$ higher spectral efficiency compared to a PD with effective area of  $1\text{ }\mathrm{mm}^2$. 
This indicates that as the PD effective area increases, the spectral efficiency gain is not linearly increased. 
Finally, we observe that for a given signal PSD and PD effective area, the heterodyne RX outperforms the corresponding IM/DD RX. 
For example, for $\tilde{P}_s=1\text{ }\mathrm{\mu W/MHz}$ and $A=1\text{ }\mathrm{mm}^2$, the use of IM/DD instead of  heterodyne RX reduces the spectral efficiency by about $3.6\%$.
However, it is worth noting that IM/DD exhibits considerably lower complexity compared to the corresponding heterodyne RX.

\section{Conclusion} \label{S:C}
The above findings revealed that OWCIs outperforms the corresponding baseline CIs in terms of received signal quality, outage performance, spectral efficiency and channel capacity.  
In addition, they require significantly less transmit power, which renders them highly energy efficient. Notably, this difference in required power is about three orders of magnitude, since OWCI can achieve Mbps capacity levels with only few $\mu$W, contrary to conventional RF based CIs that require at least few mW to achieve this capacity. 
Furthermore, another  advantage of OWCIs in comparison with RF based CI solutions is that they operate in a non-standardized frequency region, 
where there is a significantly  large amount of unexploited bandwidth; as a result,  there is no interference from other medical implanted devices.  
Therefore, it is evident that all these factors constitute OWCI a particularly attractive alternative to the RF CI solution. 

\section*{Appendices}

\section*{Appendix A}
\section*{Proof of Theorem 1}
According to~\eqref{Eq:SNR_eq}, the instantaneous SNR is a random variable (RV) that follows the same distribution as~$h_p^2$. 
Therefore, in order to derive the average SNR of the optical link, we first need to identify the distribution of $h_p^2$. 

By assuming that the spatial intensity of $w_{\rm \delta}$  on the RX plane at distance $\delta$ from the TX and circular aperture of radius $\beta$,  the stochastic term of the channel coefficient, which represents the fraction of the collected power due to geometric spread with radial displacement $r$ from the origin of the detector, can be approximated as
\begin{align}
h_p \approx A_0\exp\left(-\frac{2r^2}{w_{\rm eq}^2}\right).
\label{Eq:hp1}
\end{align}
This is a well-known approximation that has been used in several reported contributions (see for example~\cite{A:Outage_Capacity_for_FSO_with_pointing_errors,A:BER_performance_of_FSO_link_over_strong_atm_turbulence_channels_with_pointing_errors}, and references therein).

Moreover, by assuming that the elevation and the horizontal displacement (sway) follow independent and identical Gaussian distributions and based on~\cite{A:Arnon2003Effects},we observe that the radial displacement at the RX follows a Rayleigh distribution with a probability density function (PDF) given by~\cite{B:Probability_Random_Variables_and_Stochastic_Processes}
\begin{align}
f_r\left(r\right) = \frac{r}{\sigma_s^2}\exp\left(-\frac{r^2}{2\sigma_s^2}\right), \qquad r>0
\label{Eq:fr1}
\end{align}
By combining \eqref{Eq:hp1} and \eqref{Eq:fr1}, the PDF of $h_p$ can be expressed as
\begin{align}
f_{\rm h_p}\left(x\right) = \frac{\xi}{A_0^{\xi}}x^{\xi-1}, \qquad 0 \leq x \leq A_0
\label{Eq:fhp1}
\end{align}
while, its cumulative distribution function (CDF)  can be obtained~as
\begin{align}
F_{\rm h_p}\left(x\right) \triangleq \int_0^{x} \! f_{\rm y}\left(x\right) \, \mathrm{d}y
\label{Eq:CDF_hp1}
\end{align}
which upon substituting~\eqref{Eq:fhp1} in~\eqref{Eq:CDF_hp1} and performing the integration yields
\begin{align}
F_{\rm h_p}\left(x\right) = \left\{\begin{array}{c c} \frac{1}{A_0^{\xi}}x^{\xi} & 0\leq x \leq A_0 \\ 1, & x\geq A_0\end{array}\right.
\label{CDF:hp2}
\end{align}

The CDF of $h_p^2$ can be obtained~as
\begin{align}
F_{\rm h_p^2}\left(x\right)= P\left(h_p^2 \leq x\right) = P\left(h_p \leq \sqrt{x}\right) = P\left(h_p \leq \sqrt{x}\right) = F_{\rm h_p}\left(\sqrt{x}\right)
\end{align}
which, by using~\eqref{CDF:hp2}, can be rewritten~as
\begin{align}
F_{\rm h_p^2}\left(x\right)=\left\{\begin{array}{c c} \frac{1}{A_0^{\xi}}x^{{\xi}/{2}} & 0\leq x \leq A_0^2 \\ 1, & x\geq A_0^2\end{array}\right..
\label{Eq:CDF:hp2}
\end{align}
Moreover, the PDF of $h_p^2$ can be obtained~as
\begin{align}
f_{\rm h_p^2}\left(x\right)= \frac{\mathrm{d}F_{\rm h_p^2}\left(x\right)}{\mathrm{d}x}
\label{Eq:PDF:hp2_1}
\end{align}
or, by employing~\eqref{Eq:CDF:hp2} as
\begin{align}
f_{\rm h_p^2}\left(x\right)= \frac{\xi}{2A_0^{\xi}}x^{\frac{\xi}{2}-1}.
\label{Eq:PDF:hp2}
\end{align}

Finally, the average SNR can be defined~as
\begin{align}
\tilde{\gamma}=\mathbb{E}[\gamma].
\label{Eq:AvSNR}
\end{align}
Thus, by substituting~\eqref{Eq:SNR_eq} into~\eqref{Eq:AvSNR}, and using~\eqref{Eq:PDF:hp2}, we obtain
\begin{align}
\tilde{\gamma}=\frac{R \exp\left(-\alpha(\lambda) \delta\right) \tilde{P}_s}{2 q R  P_b + 2 q   I_{\rm DC} +N_0} \frac{\xi}{2 A_0^\xi}\int_0^{A_0^2}x^{\frac{\xi}{2}}\mathrm{d}x
\end{align}
which, by performing the integration yields~\eqref{Eq:AverageSNR}, which concludes the~proof.

\section*{Appendix B}
\section*{Proof of Theorem 2}
By assuming full knowledge of the channel state information (CSI)   at both the TX and the RX, the ergodic spectral efficiency, C, (in $\mathrm{bits/channel\text{ }use}$) can be obtained~by~\cite[eq. (7.43)]{B:Advanced_OWC_Systems},~\cite[eq. (9.40)]{B:OWC_An_Emerging_Technology} and ~\cite{gao2016average,
A:ErgodicCapacityComparisonOfOpticalWirelessCommunicationsUsingAdaptiveTransmissions,
cheng2014average,
C:OWC_SystemModel_Capacity_and_Coding,
A:ImpactOfPointingErrorsOnThePerformanceOfMixedRF_FSODualHopTransmissionSystems,
A:ErgodicCapacityAnalysisOfFreeSpaceOpticalLinksWithNonzeroBoresightPointingErrors,
A:PracticalSwithcingBasedHybridFSO_RF_TransmissionAnd_its_performance_analysis,
doi:10.1117/1.OE.53.1.016107,
6987241,
6515209,
7192727,
garcia2010average}, 
namely
\begin{align}
C = \frac{1}{2} \mathbb{E}\left[\log_2\left(1 + \psi \gamma\right)\right]
\label{Eq:EC_def}
\end{align}
where 
\begin{align}
\psi = \left\{\begin{array}{c c} 1, & \text{for heterodyne RX} \\
\tfrac{e}{2\pi}, & \text{for IM/DD RX}\end{array}\right.
\end{align}
or equivalently
\begin{align}
C = \frac{1}{2} \int_{\rm 0}^{A}\log_2\left(1 + \psi \frac{R^2 \exp\left(-\alpha(\lambda) \delta\right) x^2 \tilde{P}_s}{2 q R P_b + 2 q  I_{\rm DC} + N_0}\right) f_{\rm h_p}(x)\mathrm{d}x.
\label{Eq:EC_step1}
\end{align}
By substituting~\eqref{Eq:fhp1} into~\eqref{Eq:EC_step1} and after some algebraic manipulations, equation \eqref{Eq:EC_step1} is given by
\begin{align}
C = \frac{\xi}{2 A_0^\xi\ln\left(2\right)} \mathcal{I}
\label{Eq:EC_step2}
\end{align}
where 
\begin{align}
\mathcal{I}=\int_{\rm 0}^{A_0}x^{\xi-1}\ln(1+\mathcal{B}(\lambda) x^2) \mathrm{d}x
\end{align}
or
\begin{align}
\mathcal{I}=\frac{1}{\xi}\int_{\rm 0}^{A_0}\ln(1+\mathcal{B}(\lambda) x^2) \frac{\mathrm{d}x^{\xi}}{\mathrm{d}x}\mathrm{d}x. 
\label{Eq:I_s2}
\end{align}
By applying integration by parts into~\eqref{Eq:I_s2}, we obtain
\begin{align}
\mathcal{I}=\frac{1}{\xi}A_0^\xi \ln\left(1+\mathcal{B}(\lambda) A_0^2\right) - \frac{1}{\xi}\int_0^{A_0}\frac{x^\xi}{1+\mathcal{B}(\lambda) x^2}\mathrm{d}x
\end{align}
which, by using~\cite{B:Gra_Ryz_Book}, can be expressed~as
\begin{align}
\mathcal{I}=\frac{1}{\xi}A_0^\xi \ln\left(1+\mathcal{B}(\lambda) A_0^2\right) - \frac{1}{\xi}A_0^{\xi+2} \mathcal{B}(\lambda) \Phi\left(A_0^2 \mathcal{B}(\lambda), 1, 1+\frac{\xi}{2}\right).
\label{Eq:I_s4}
\end{align}
Finally, by substituting~\eqref{Eq:I_s4} into~\eqref{Eq:EC_step2}, and after some mathematical manipulations, we obtain~\eqref{Eq:ergodic_cap}, which concludes the proof.

\section*{Appendix C}
\section*{Proof of Proposition 1}

It is recalled that the Lerch-$\Phi$ function can be expressed in integral form as 
\begin{equation}
\Phi(a, b, x) \triangleq  \frac{1}{\Gamma(b)} \int_0^\infty \frac{y^{b - 1}e^{-xy}}{1 - ae^{-y}} \mathrm{d}y.
\end{equation}
Based on this and recalling that $\Gamma(1) \triangleq 0!$,  it immediately follows that
\begin{align}
C = \frac{1}{2} \log_2\left(1+ \mathcal{B}(\lambda) A_0^2\right) - \frac{1}{2} \frac{A_0^{2} \mathcal{B}(\lambda)}{\ln(2)}  \int_0^\infty \frac{ e^{-\left(  1+\frac{\xi}{2}\right)y}}{1 + A_0^2 \mathcal{B}(\lambda)e^{-y}} \mathrm{d}y.
\label{Eq:ergodic_cap_c}
\end{align}
Notably, $1 + A_0^2 \mathcal{B}(\lambda)e^{-y}> A_0^2 \mathcal{B}(\lambda)e^{-y}$, which when $A_0^2 \mathcal{B} >> 1$ becomes $1 + A_0^2 \mathcal{B}(\lambda)e^{-y} \simeq  A_0^2 \mathcal{B}(\lambda)e^{-y}$. To this end, equation \eqref{Eq:ergodic_cap_c} cen be lower bounded as follows: 
\begin{align}
C > \frac{1}{2} \log_2\left(1+ \mathcal{B}(\lambda) A_0^2\right) - \frac{1}{2} \frac{A_0^{2} \mathcal{B}(\lambda)}{\ln(2)}  \int_0^\infty \frac{ e^{-\left(  1+\frac{\xi}{2}\right)y + y}}{ A_0^2 \mathcal{B}(\lambda)} \mathrm{d}y.
\label{Eq:ergodic_cap_c}
\end{align}
Therefore, by evaluating the above basic integral and after some algebraic manipulations one obtains \eqref{Eq:ergodic_cap_b}, which concludes the proof.

\section*{Appendix D}
\section*{Proof of Theorem 3}
We recall that the outage probability is defined as the probability that the achievable spectral efficiency,~$C$,  falls below a   predetermined threshold, $r_{\rm th}$, i.e.
\begin{align}
P_o(r_{\rm th}) = P_{\rm r}(C\leq r_{\rm th})= P_{\rm r}(\gamma\leq \gamma_{\rm th})
\label{Eq:OP_def}
\end{align}
where $r_{\rm th}$ and $\gamma_{\rm th}$ represent the data rate threshold and the SNR threshold, respectively,  whereas  
\begin{align}
\gamma_{\rm th}=\frac{2^{2 r_{\rm th}}-1}{\psi}.
\end{align}

Thus, with the aid of~\eqref{Eq:SNR_eq}, equation \eqref{Eq:OP_def} can be rewritten~as
\begin{align}
P_o(\gamma_{\rm th}) = P_{\rm r}\left(h_p^2\leq\frac{2 q R P_b + 2 q I_{\rm DC} + N_0}{R^2 \exp\left(-\alpha(\lambda)\delta\right) \tilde{P}_s} \gamma_{\rm th} \right) 
= F_{\rm h_p^2}\left(\frac{2 q R P_b + 2 q I_{\rm DC} + N_0}{R^2 \exp\left(-\alpha(\lambda)\delta\right) \tilde{P}_s} \gamma_{\rm th} \right) 
\end{align}
where $F_{\rm h_p^2}\left(\cdot\right)$ denotes  the cumulative distribution function (CDF) of the stochastic process $h_p^2$, and can  be deduced from~\eqref{Eq:fhp1}. 
As a result, \eqref{Eq:OP_def} can be rewritten in closed-form as in~\eqref{Eq:OP}, which  concludes the proof.

\section*{Funding}
Khalifa University of Science and Technology (8474000122, 8474000137).

\section*{Disclosures}
The authors declare that there are no conflicts of interest related to this article.


\bibliography{refs}

\end{document}